\definecolor{keywordred}{RGB}{180,0,0}
\definecolor{funcgreen}{RGB}{0,140,100}
\definecolor{typeyellow}{RGB}{180,140,0}
\definecolor{codecomment}{RGB}{100,100,100}
\lstdefinestyle{PythonStyle}{
    language=Python,
    basicstyle=\small\ttfamily,
    keywordstyle=\color{keywordred}\bfseries,
    stringstyle=\color{typeyellow},
    commentstyle=\color{codecomment}\itshape,
    emph={Dispatch,OnComplete,Cancel,QueryEnsemble},
    emphstyle=\color{funcgreen},
    emph={[2]str,int,bool,ConsensusCtx,RequestHandle,Callable,Answer,EnsembleState},
    emphstyle={[2]\color{typeyellow}},
    showstringspaces=false,
    breaklines=true,
    frame=none,
    xleftmargin=1em,
    aboveskip=0.5em,
    belowskip=0.5em,
}
\def\Snospace~{\S{}}
\newtheorem{theorem}{Theorem}
\newtheorem{lemma}{Lemma}
\theoremstyle{definition}
\newtheorem{definition}{Definition}
\newtheorem{assumption}{Assumption}
\newcommand{\para}[1]{\noindent \textbf{#1 }}
\definecolor{darkgreen}{rgb}{0.078,0.667,0.016}
\newcommand{\ie}{i.e.\@\xspace}
\newcommand{\sys}{Aegean\xspace}
\newcommand{\mtype}[1]{\texttt{#1}\xspace}
\newcommand{\state}[1]{\texttt{#1}\xspace}
\newcommand{\msg}[2]{$\langle$\mtype{#1}, \textit{#2}$\rangle$\xspace}
\begin{document}


\title{Reaching Agreement Among Reasoning LLM Agents}

\author{
{\rm Chaoyi Ruan}\\
NUS
\and
{\rm Yiliang Wang}\\
NUS
\and
{\rm Ziji Shi}\\
NUS
\and
{\rm Jialin Li}\\
NUS
} 

\maketitle

\begin{abstract}
\noindent
Multi-agent systems have extended the capability of agentic AI.
Instead of single inference passes, multiple agents perform collective reasoning to derive high quality answers.
However, existing multi-agent orchestration relies on static heuristic workflows such as fixed loop limits and barrier synchronization.
These ad-hoc approaches waste computational resources, incur high latency due to stragglers, and risk finalizing transient agreements.
We argue that reliable multi-agent reasoning requires a formal foundation analogous to classical distributed consensus problem.

To that end, we propose a formal model of the multi-agent refinement problem.
The model includes definitions of the correctness guarantees and formal semantics of agent reasoning.
We then introduce \sys, a consensus protocol designed for stochastic reasoning agents that solves multi-agent refinement.
We implement the protocol in \sys-Serve, a consensus-aware serving engine that performs incremental quorum detection across concurrent agent executions, enabling early termination when sufficient agents converge.
Evaluation using four mathematical reasoning benchmarks shows that \sys provides provable safety and liveness guarantees while reducing latency by 1.2--20$\times$ compared to state-of-the-art baselines, maintaining answer quality within 2.5\%.
Consistent gains across both local GPU deployments and commercial API providers validate that consensus-based orchestration eliminates straggler delays without sacrificing correctness.

\end{abstract}

\section{Introduction}
\label{sec:intro}

\noindent
A new computational pattern is reshaping how AI systems solve complex problems: Rather than relying on a single inference pass, modern systems orchestrate multiple reasoning paths that deliberate collectively to produce answers that exceed the capabilities of individual attempts.
This pattern emerges in frontier models like Gemini Deep Research~\cite{google_deepthink_2025} and Grok Heavy~\cite{xai_grok4_2025} that explore diverse solution trajectories to multi-agent architectures where independent model instances engage in iterative discussion~\cite{du2023mad,wu2023autogen,li2023camel,han2024multiagent}, critique each other's reasoning~\cite{chai2025scimaster,chen2025tumix}, and progressively refine their answers~\cite{madaan2023self,li2025selfmoa,zhang2024dei}.
These approaches share a common structure: Parallel reasoning processes generate candidate solutions, exchange information, and eventually converge on a single result.
They face a challenge similar to the core problem in distributed consensus: How should a set of independent processes reach agreement over conflicting proposals?

\begin{figure}[!t]
 \centering
 \includegraphics[width=0.45\textwidth]{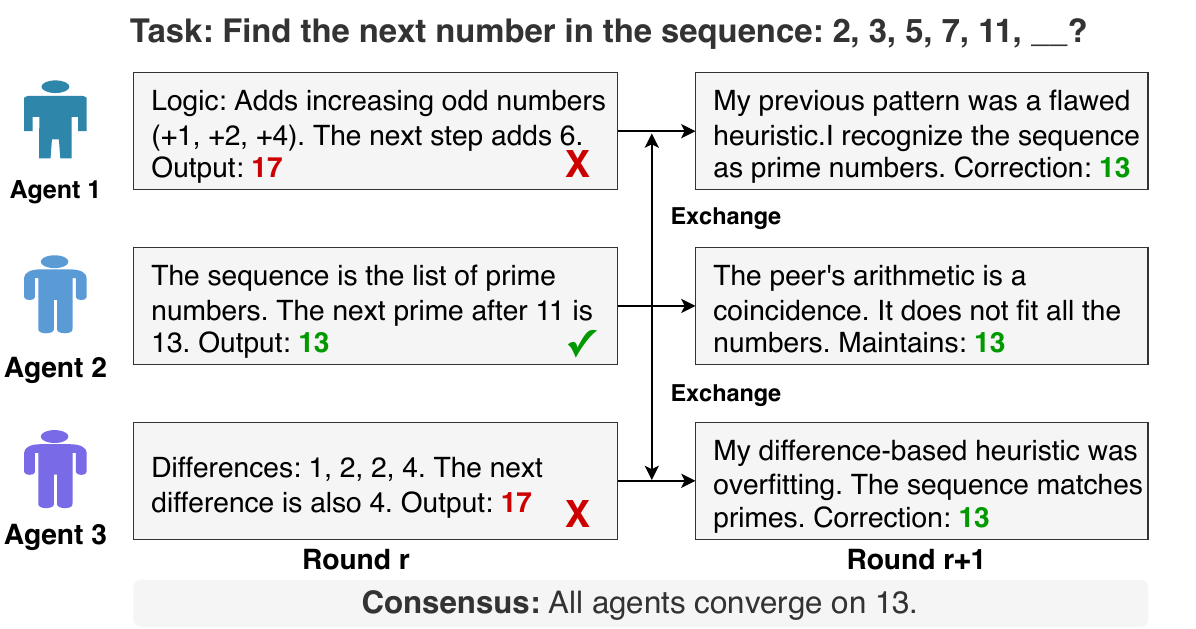}
 \caption{A multi-agent consensus example from 3 models.}
 \label{fig:example}
\end{figure}

\autoref{fig:example} illustrates this paradigm: Initially, Agent~1 and Agent~3 propose 17 
based on flawed arithmetic, while Agent~2 correctly identifies the prime sequence and outputs 13.
After exchanging reasoning traces, Agent~1 and Agent~3 recognize their error and converge to the correct answer.
This \textit{refinement process} enables higher solution accuracy than single inference passes.
However, it is unclear \textit{when} the system should output a solution and \textit{which} solution it should output.
Outputting a solution with majority support in round $r$ would capture the wrong answer;
delaying the output in rounds after $r+1$ would unnecessarily prolong the refinement process.

Classical distributed consensus problem addresses similar questions.
Unfortunately, consensus protocols such as Paxos~\cite{lamport2001paxos} and Raft~\cite{ongaro2014raft} do not apply in the multi-agent refinement context.
These protocols assume deterministic state machines.
LLMs, however, are inherently stochastic where the same query may lead to different responses.
Consensus protocols apply quorum intersection to ensure committed values are stable.
A solution supported by majority agents in a round may dissolve in the next round as agents refine their positions after considering peer rationales.
Lastly, consensus protocols do not enforce constraints on the committed value.
Multi-agent refinement aims to derive high-quality answers.

Fundamentally, existing literature~\cite{madaan2023self,li2025selfmoa,zhang2024dei,du2023mad,chai2025scimaster,chen2025tumix} lacks a formal model of agreement among stochastic LLM agents.
The model should define the semantics of agent reasoning, answer refinement, and termination conditions.
Without such a model, existing multi-agent solutions compromise on accuracy and performance.
First, heuristic termination mechanisms decouple computational cost from actual convergence.
The coordinators rely on fixed round limits, wasting compute on easy queries while prematurely halting complex ones.
Second, barrier-based orchestration suffers high completion time due to tail stragglers.
Standard workflows require all reasoning paths to complete before evaluating agreement~\cite{wang2025mix,chen2024magicore}, causing the slowest agent to dictate latency.
\autoref{tab:convergence-latency} shows that the slowest model has 11-64$\times$ higher latency than the fastest model.
Lastly, systems that finalize upon observing simple majority cannot distinguish unstable states from stable agreement.
They risk committing transient states that would dissolve in the next round. 

Addressing these challenges requires rethinking the foundations underlying multi-agent coordination. 
In this work, we formalize the multi-agent refinement problem.
We define the safety and liveness properties of multi-agent refinement, and provide formal semantics of agent reasoning.
To validate our model, we conducted experiments on a diverse set of real-world AI models.
The results provide strong empirical support for our agent refinement semantics.

We present \sys, a consensus framework that bridges non-deterministic reasoning and deterministic system guarantees.
Based on our formal model, we design an \sys protocol that applies leader-based coordination to progress rounds of agent refinement.
The protocol guarantees safety and liveness, including \textit{refinement monotonicity} where the quality of solutions strictly improves over rounds.
It also incorporates a modular refinement decision engine which defines customizable agreement conditions, such as solution similarity threshold and stability horizon.
Built atop this protocol, \sys-Serve embeds consensus coordination directly within inference infrastructure, enabling progressive quorum detection that evaluates outputs as they stream.
The engine immediately cancels in-progress reasoning paths once quorum is secured, decoupling latency from stragglers.
Evaluation across four reasoning benchmarks (GSM8K~\cite{cobbe2021gsm8k}, MMLU~\cite{hendryckstest2021,hendrycks2021ethics}, AIME~\cite{aime2024}, IMO~\cite{luong-etal-2025-towards}) demonstrates that \sys reduces average latency by 1.2--20$\times$ and P99 tail latency by up to 11$\times$ compared to state-of-the-art baselines.
The system maintains accuracy within 2.5\% while reducing token consumption by 1.1--4.4$\times$ through early termination when consensus stabilizes.
These gains hold consistently across locally deployed models and commercial model API, confirming that consensus-based orchestration eliminates straggler delays without sacrificing correctness.

We make the following contributions:
\begin{itemize}[noitemsep, topsep=0pt]
\item We formalize the first consensus model for stochastic multi-agent LLMs, establishing round-based stability as the necessary condition for safe commitment.

\item We design the first consensus protocol, \sys, for non-deterministic participants, introducing stability horizons and quorum-based finalization with provable safety and liveness guarantees.

\item We build the first consensus-aware serving engine, \sys-Serve, enabling consensus-aware execution and achieving significant performance benefits.
\end{itemize}
\section{Background}
\label{sec:background}
\label{subsec:multi-agent-llm}

\begin{figure}[!t]
    \centering
    \begin{subfigure}[b]{0.47\textwidth}
        \centering
        \includegraphics[width=\linewidth]{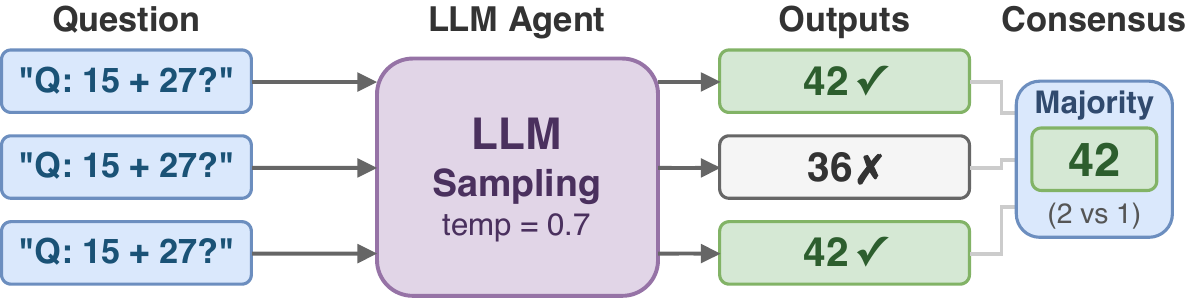} 
        \caption{Single-agent consensus process}
        \label{fig:sampling}
    \end{subfigure}
    \par\bigskip
    \begin{subfigure}[b]{0.47\textwidth}
        \centering
        \includegraphics[width=\linewidth]{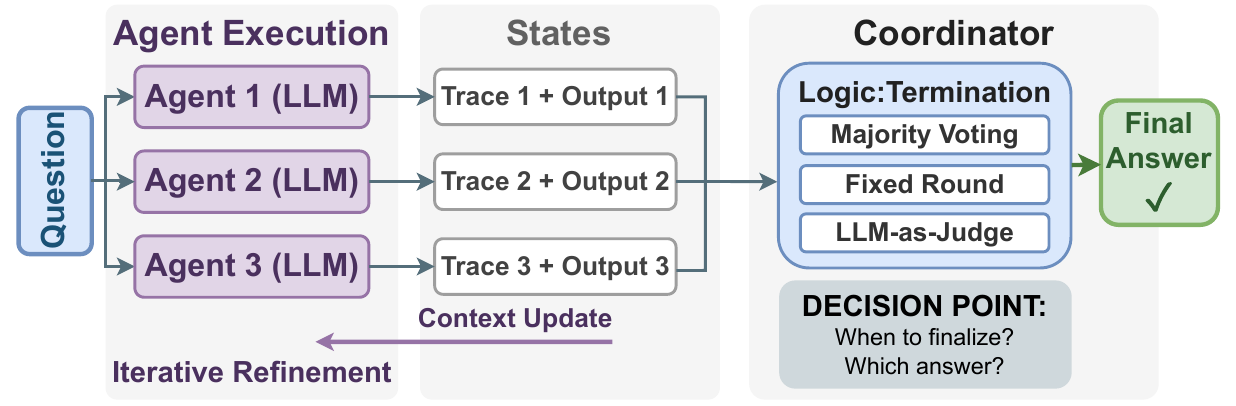}
        \caption{Multi-agent consensus process}
        \label{fig:system_arch}
    \end{subfigure}
    
    \caption{Architecture of agentic consensus. (a) Single-agent consensus aggregates samples from one model; (b) multi-agent consensus coordinates multiple agents through iterative refinement with a central coordinator.}
    \label{fig:architecture}
\end{figure}

\para{From single response to collective consensus.}
Large language models generate outputs through autoregressive generation, where each token is drawn from a probability distribution conditioned on prior context. This sampling process introduces inherent stochasticity: Given identical inputs, the LLM model may produce different outputs across invocations. 

This phenomenon motivates a natural question: Can we leverage multiple samples to identify the correct answer more reliably than a single sample, like collective wisdom? The self-consistency approach~\cite{wang2022self} first explored this by sampling multiple reasoning paths from the same model and selecting the most frequent answer, as shown in \autoref{fig:architecture}-(a). The insight is that by aggregating across multiple reasoning samples, the system is more likely to converge to the correct answer. However, it is limited by the knowledge of a single model.

Multi-agent systems extend this principle through two key mechanisms: debate and voting. Diverse agents engage in multi-round discussions where they critique each other's reasoning and progressively refine their proposals~\cite{du2023mad,chai2025scimaster,wang2022self}. Systems~\cite{chen2025tumix,li2025parallelmuse,kaesberg2025voting} like TUMIX~\cite{chen2025tumix} instantiate this pattern by having agents exchange reasoning traces across multiple rounds~\cite{madaan2023self,ji2023reflection}, with the premise that collective deliberation can exceed individual reliability. The decision-making stage usually employs consensus mechanisms like majority voting~\cite{wang2022self,chen2025sets,chen2025tumix,li2025parallelmuse, fu2025deepconf}, where the system commits to whichever answer receives the most votes. Note that majority voting remains the dominant approach due to its interpretability.

\autoref{fig:architecture}-(b) abstracts the common architecture underlying these multi-agent systems.
A coordinator orchestrates a panel of agents, each acting as an independent reasoning node: It receives a problem, generates a candidate solution with supporting rationale, and may revise its answer upon receiving peer solutions. The coordinator aggregates agent outputs after each round and distributes the collective context back to agents for further deliberation. This cycle continues until the termination condition is met.

The combination of parallel exploration and iterative refinement creates systems that significantly outperform single-agent inference on complex reasoning tasks. However, it also creates new problems: When should the system commit to a final result, and what does agreement mean for stochastic agents that may change their positions?

\section{The Multi-Agent Consensus Problem}
\label{subsec:consensus-challenge}

\begin{figure}[!t]
 \centering
 \includegraphics[width=0.45\textwidth]{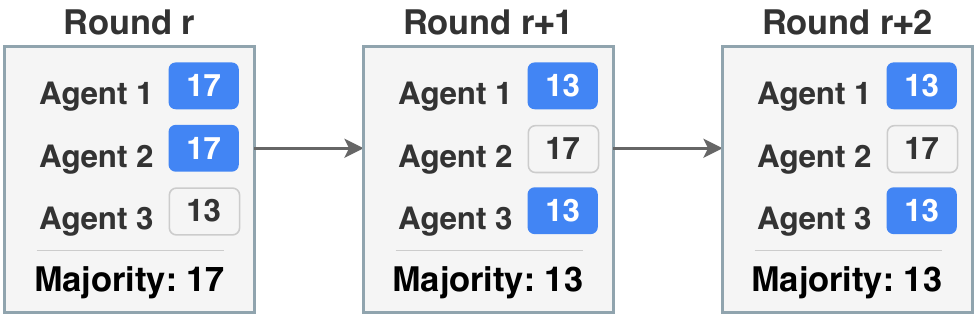}
 \caption{Consensus is unstable; reasoning exchange can flip a majority from $A$ to $B$ between rounds $r$ and $r+1$, rendering termination strategies unreliable.}
 \label{fig:consensus-problem}
\end{figure}

\begin{table}[!t]
\centering
\small
\caption{Convergence efficiency and model latency overhead using GPT-5-mini~\cite{openai_gpt5mini_2025}, Gemini-2.5-Flash~\cite{google_gemini25flash_2025}, and DeepSeek-R1~\cite{deepseekai2025r1}. $L_{\text{fast}}$, $L_{\text{med}}$, and $L_{\text{slow}}$ denote the latency of the fastest, median, and slowest models per round.}
\label{tab:convergence-latency}
\resizebox{0.48\textwidth}{!}{
\begin{tabular}{l|c|ccc|c}
\toprule
& \textbf{Convergence} & \multicolumn{3}{c|}{\textbf{Per-Round Latency (s)}} & \textbf{Barrier} \\
\textbf{Benchmark} & Avg. Rounds & $L_{\text{fast}}$ & $L_{\text{med}}$ & $L_{\text{slow}}$ & Max($L_i$) \\
\midrule
GSM8K & 2.0 & 1.3 & 4.4 & 15.2 & 15.2 \\
MMLU  & 2.3 & 2.5 & 5.8 & 45.0 & 45.0 \\
AIME  & 4.1 & 5.8 & 29.4 & 370.6 & 370.6 \\
\bottomrule
\end{tabular}
}
\end{table}

\noindent
For systems~\cite{chen2025tumix,li2025parallelmuse,kaesberg2025voting} following the architecture in \autoref{fig:architecture}-(b), this termination decision involves two distinct challenges: determining what constitutes effective agreement among stochastic agents, and navigating the trade-off between inference overhead and answer quality. We examine both challenges, demonstrating that classical consensus assumptions are fundamentally incompatible with LLM agents and quantifying the severe performance penalties of current approaches.

\para{The agreement problem: consensus instability.}
Classical consensus protocols~\cite{lamport2001paxos,ongaro2014raft} like Paxos~\cite{lamport2001paxos} guarantee \textit{agreement permanence}: Once a value is accepted by a majority of replicas, the value is permanently committed.
These protocols leverage the \textit{quorum intersection} property to provide such a guarantee despite node failures and network asynchrony.
However, they are agnostic to the semantics of the agreed value;
they permit any proposed value to be committed.

Multi-agent LLM systems face a different problem.
Agents actively revise and refine their solutions based on new context from other agents' reasoning.
An agreement based on simple majority voting does not ensure persistence of the answer, since agents may further refine their solutions.
It also enforces no guarantee on the answer quality --- a majority of agents may incidentally arrive at the same wrong solution.

Consider the example in~\autoref{fig:consensus-problem}.
Three agents deliberate on a mathematical problem.
In round $r$, the agents produce answers $(17, 17, 13)$, yielding a $\frac{2}{3}$ majority for $17$.
However, this majority answer is neither stable nor correct.
After exchanging reasoning traces, Agent 1 recognizes an error in its earlier answer, and the answer distribution shifts to $(13, 17, 13)$ in round $r+1$.
The majority has now shifted from $17$ to $13$.

This instability shows that a simple majority voting approach is unreliable.
\autoref{fig:consensus-problem} illustrates the symptom (a shifting majority), while our experiments (\autoref{tab:agent-exchange}) reveal the root cause: high volatility of agent reasoning.
We observed that agents frequently revise their solutions after reviewing reasoning from peers.
Without proper semantics and protocol designs, a multi-agent system cannot distinguish between coincident agreements that may oscillate and a stable solution that can persist as agents further refine their reasoning.

Issues with current solutions ask for a \emph{formal model} of agreement for non-deterministic agents.
Only with such formal semantics, a multi-agent protocol can reason about the convergence of refinements towards higher quality answers, and determine when a solution has reached stable consensus.

\para{The performance trade-off.}
Even if agreement can be defined, detecting it imposes severe performance costs. The system must balance inference overhead (latency and token consumption) against answer quality. This trade-off manifests in two ways: across rounds (how long to deliberate) and within rounds (how long to wait for agents).

\textit{Across rounds: the inefficiency of fixed limits.} 
Current systems~\cite{wang2025mix} often rely on fixed round limits (e.g., stopping after 4 rounds) to bound computational cost. However, this heuristic ignores the varying complexity of tasks. As shown in the ``Convergence'' column of \autoref{tab:convergence-latency}, both GSM8K and MMLU problems converge rapidly in 2.0 to 2.3 rounds, yet complex reasoning tasks like AIME require 4.1 rounds on average. A fixed limit wastes computation on simple problems while potentially truncating necessary reasoning on difficult ones. Fixed limits thus fundamentally misalign computational cost with the actual difficulty of the problem.

\textit{Within rounds: the bottleneck of barrier synchronization.}
Standard workflows enforce barrier synchronization~\cite{wang2025mix,chen2025tumix}, waiting for all agents to complete a round before evaluating agreement~\cite{wang2025mix,chen2024magicore}. This couples system latency to the slowest straggler. \autoref{tab:convergence-latency} illustrates this disparity: On GSM8K, while the fastest model completes in 1.3s, the system must wait 15.2s for the slowest, resulting in an 11$\times$ slowdown. On MMLU, this gap grows to 18$\times$ (45.0s versus 2.5s). AIME exhibits the most severe disparity, where the slowest model takes 370.6s compared to the fastest at 5.8s, a 64 $\times$ difference. 
Crucially, the straggler identity varies: In our experiments, DeepSeek-R1 was the slowest in 50\% of GSM8K rounds, GPT-5-mini in 42\%, and Gemini in 8\%. This variability makes static exclusion strategies infeasible. When a sufficient subset of agents has already converged, waiting for unpredictable stragglers negates the benefits of parallel exploration without improving the outcome.

Navigating this trade-off requires efficient commitment semantics that finalize an answer as soon as sufficient agreement is detected, without waiting for redundant responses. However, such semantics are only meaningful once the agreement problem is solved. This raises the central question of this paper: How can a multi-agent system determine when agreement is sufficiently stable to commit safely, while still terminating efficiently? Answering this requires a principled framework that formalizes agreement among stochastic agents. No such framework exists today. 

In the following sections, we bridge this gap by formalizing multi-agent refinement as a consensus problem, establishing the theoretical foundation for safe and efficient termination.

\section{Formalizing Multi-Agent Refinement}
\label{sec:model}

\subsection{Problem Formulation}
\label{sec:model:problem}

\noindent
In classical consensus problems, a group of processes proposes and decides on some arbitrary values.
A correct consensus protocol is defined to provide the following guarantees:

\begin{itemize}[noitemsep, topsep=0pt]
    \item \textbf{Validity}: If a correct process decides on a value $v$, $v$ is proposed by a correct process.
    \item \textbf{Agreement}: If values $v$ and $v'$ are decided on any correct processes, then $v = v'$.
    \item \textbf{Termination}: Eventually, every correct process decides some value $v$.
\end{itemize}

Our work targets collaborative multi-agent refinement, which is similar, but not identical, to the consensus problem. 
In our setting, a group of agents is initialized with a common task (how this task is distributed to the agents is irrelevant to the problem).
Each agent can generate a solution to the task based on its local LLM model and context.
When an agent thinks a solution is ready, it can \emph{output} the solution.
Note that an agent can output multiple solutions.
A correct multi-agent refinement protocol shares a similar \emph{termination} property to a consensus protocol:

\begin{itemize}[noitemsep, topsep=0pt]
    \item \textbf{Refinement Termination}: Eventually, some correct agent outputs a solution $s$.
\end{itemize}

The original validity property of the consensus problem is too weak for multi-agent refinement: It accepts a solution even if it is generated by the weakest agent.
Our goal is to generate refined solutions that are better than those generated by individual agents.
Determining which solution is ``better'' requires a function that evaluates the quality of answers.
We define such a function as a \emph{Quality Oracle}:

\begin{definition}[Quality Oracle]
A Quality Oracle $Q$ is a function ($Q : S \times S \rightarrow \mathbb{R}$) that takes two strings as input, where the first string is a task description and the second input is a solution, and outputs a real number representing the quality of the solution to the task.
\end{definition}

We assume $Q$ is a proper function, \ie, given the same input strings, the output value is deterministic.
Note that this quality oracle is unknown to any agent.
To derive a meaningful validity requirement, we define a \emph{majority optimal solution} as follows.
Suppose that each agent can individually generate a solution, a \emph{majority optimal solution} is the best solution (evaluated by $Q$) among the individual solutions generated by a simple majority of agents.
Now, we can define the validity property of multi-agent refinement:

\begin{itemize}[noitemsep, topsep=0pt]
    \item \textbf{Refinement Validity}: 
    The quality of any output solution $s$ is \emph{at least as good} (evaluated by $Q$) as that of \emph{some} majority optimal solution.
\end{itemize}

The usefulness of our refinement validity property depends on the distribution of agent reasoning capabilities.
In a deployment where the majority of agents exhibit strong reasoning capabilities, the property ensures that any output solution has high quality:
By majority intersection, any majority of agents will include at least one strong agent.
Due to the majority optimal property, an output solution will therefore have at least the quality of a strong agent.
Moreover, refinement validity only specifies a \textit{lower bound} on the solution quality.
Practical deployments of multi-agent refinement protocols can produce solutions with stronger properties, as we demonstrate in~\autoref{sec:model:validation}.

The combination of termination and validity creates a problem that cannot be solved trivially.
Without the validity requirement, any agent could simply output its own initial solution.
Without the termination requirement, agents could deliberate indefinitely.
The interaction between these properties requires careful protocol designs to ensure that agents converge to high-quality solutions within bounded time.

Classical consensus requires strict agreement: All correct processes must decide on the same value.
For multi-agent refinement, this requirement is unnecessarily restrictive.
The primary objective is solution quality rather than uniqueness.
We therefore relax agreement to a monotonicity property that permits multiple output solutions, provided that the quality never decreases:

\begin{itemize}[noitemsep, topsep=0pt]
    \item \textbf{Refinement Monotonicity}:
    If a solution $s'$ is outputted after solution $s$ (in real time), the quality of $s'$ is \textit{at least as good} (evaluated by $Q$) as that of $s$.
\end{itemize}

This property ensures that the refinement process makes monotonic progress toward higher-quality solutions.
Even if the protocol generates multiple solutions over time, each subsequent output represents an improvement or maintains the current quality level.

\para{Failure and network models.}
We assume that at most $f$ agents can fail.
We target a \textit{fail-stop model} in which a failed agent stops responding indefinitely but always follows the protocol precisely.
A failure may occur due to server crashes, network partitions, or an indefinite reasoning procedure in an agent.
The network is assumed to be \textit{partially synchronous}~\cite{partialsync}, i.e., there exists a finite but unknown period (commonly known as the global stabilization time) after which all messages are delivered within bounded delay.

\subsection{Agent Operational Semantics}
\label{sec:model:semantics}

\noindent
A key capability of an agent is to perform reasoning (e.g., using an LLM model) based on task descriptions and context to generate solutions.
We now formalize the operational semantics of an agent and the assumptions we make.
Given the stochastic nature of existing AI models, proving formal guarantees of an agent is outside the scope of this work.
Nevertheless, we provide empirical evidence in~\autoref{sec:model:validation} to demonstrate that our semantics and assumptions are sound in practice.

We abstract an agent as a \textit{state machine}, which maintains a context state $c$.
The exact representation of the context state is irrelevant to our problem.
The state machine exposes a reasoning function $R$ with the following definition:

\begin{definition}[Agent Reasoning Function]
    An agent exposes a reasoning function $R$ ($R: c \times S \rightarrow c' \times S$) that takes the current agent context $c$ and a string as input.
    The string is either a task description or a list of reasoning traces.
    $R$ generates a new agent context $c'$, and outputs a string.
    The string includes both a task solution and a reasoning trace of this step.
\end{definition}

In practice, agents implicitly consume their current context and update their context when performing reasoning steps.
Therefore, we omit the context input/output when specifying the reasoning function (e.g., simply writing $s_o \gets R(s_i)$).

When an agent receives a task description, we do not make explicit assumptions about the quality of its reasoning function output (as evaluated by $Q$).
However, we assume that all agents are capable of \emph{reasoning refinement}, \ie,
given a set of reasoning traces and solutions produced by other agents (or possibly from itself), an agent can improve its own reasoning to generate solutions that match the highest quality answer:

\begin{assumption}[Reasoning Refinement]
\label{assump:refine}
    Given a set of solutions and their reasoning traces $S$, suppose an agent applies its reasoning function on a string $\Bar{s}$ that combines all solutions in $S$ to produce a new solution.
    Then the quality of the output solution is \emph{at least as good} (evaluated by $Q$) as any solution in $S$, \ie, $\forall s \in S, Q(R(\Bar{s})) \geq Q(s)$
\end{assumption}

\subsection{Agent Semantics Validation}
\label{sec:model:validation}

\begin{table}[!t]
  \centering
  \footnotesize
  \setlength{\tabcolsep}{4pt} 
  \caption{Accuracy before and after agent exchange. ``Init.'' and ``Post'' denote the accuracy before and after exchanging answers. ``Changes'' indicates the count of decision flips.}
  
  \resizebox{0.45\textwidth}{!}{%
  \begin{tabular}{llcccc}
    \toprule
    \textbf{Dataset} & \textbf{Model} & \textbf{Init.} & \textbf{Post} & \textbf{$\Delta$} & \textbf{Changes} \\
    \midrule
    \multirow{2}{*}{GSM8K}
      & Llama 3.1-8B & 88\% & 96\%    & +8\%  & 17/100 \\
      & DeepSeek-R1          & 98\% & 98\%    & 0\%   &  1/100 \\
    \midrule
    \multirow{2}{*}{MMLU}
      & Llama 3.1-8B & 43\% & 79\%    & +36\% & 64/100 \\
      & DeepSeek-R1          & 62\% & 63\%    & +1\%  &  1/100 \\
    \midrule
    \multirow{2}{*}{AIME}
      & Llama 3.1-8B & 3\%  & 54.4\% & +51\% & 70/90 \\
      & DeepSeek-R1          & 70\% & 72\%    & +2\%  & 10/90 \\
    \bottomrule
  \end{tabular}%
  }
  \label{tab:agent-exchange}
\end{table}

\noindent
Safety and liveness properties of a multi-agent refinement protocol rely on the reasoning refinement assumption.
We now conduct experiments to validate whether real-world AI models conform to this assumption.

We evaluate the reasoning refinement assumption through a collaborative exchange protocol pairing models of different reasoning capacities. A stronger model (DeepSeek-R1) shares its reasoning traces and solutions with a weaker model (Llama3.1-8B-Instruct), and vice versa. Under the refinement assumption, we expect the weaker model to improve by learning from the stronger model's reasoning, while the stronger model should maintain stable performance when exposed to potentially inferior traces. We evaluate on three benchmarks spanning different reasoning datasets: AIME~\cite{aime2024}, MMLU~\cite{hendrycks2021ethics,hendryckstest2021}, and GSM8K~\cite{cobbe2021gsm8k}.

\autoref{tab:agent-exchange} summarizes the results. The weaker model exhibits substantial accuracy improvements across all benchmarks after receiving reasoning traces from DeepSeek-R1: Accuracy increases from 3\% to 54.4\% on AIME-Validation (+51\%), from 43\% to 79\% on MMLU (+36\%), and from 88\% to 96\% on GSM8K (+8\%). In contrast, the stronger model maintains remarkable stability, with minimal accuracy changes on all datasets. The low decision change counts for DeepSeek-R1 (1/100 on GSM8K, 1/100 on MMLU, 10/90 on AIME) indicate that strong models can filter inferior reasoning rather than blindly adopting it.

These results validate two properties central to \autoref{assump:refine}. First, weaker models benefit from superior reasoning traces, achieving significant improvement. Second, strong models maintain quality when observing inferior reasoning, confirming quality-preserving refinement. This asymmetry is crucial for protocol correctness: Agents selectively adopt reasoning only when it represents an improvement, ensuring that collective deliberation produces monotonically improving reasoning. These empirical findings justify our reliance on the reasoning refinement assumption for the protocol guarantee.

\section{\sys Protocol}
\label{sec:protocol}

\noindent
In this section, we introduce \sys, a concrete protocol that correctly implements multi-agent refinement (\autoref{sec:model}).

\subsection{Protocol Overview}
\label{sec:overview}

\noindent
\para{Agent ensemble.} An \sys deployment consists of $N$ agent processes, denoted $A_i$ where $i$ ranges from $0$ to $N-1$, forming an LLM ensemble. Agents have complete network connectivity.
We assume a partially synchronous~\cite{partialsync} network, and at most $\lceil \frac{N-1}{2} \rceil$ agents may suffer fail-stop failures.
An \sys deployment takes as input a single task description, and outputs one or multiple solutions to the task.
The protocol guarantees all properties of multi-agent refinement, as defined in \autoref{sec:model}.
The protocol can be easily extended to a multi-instance version;
details of such a protocol variant are omitted.

The protocol progresses in \emph{terms}.
In each term, there is at most one leader agent.
We use standard leader election protocols~\cite{lamport2001paxos,pmmc,ongaro2014raft} to elect a leader agent.
When the leader agent in a term becomes unreachable, the remaining agents advance to the next term and start a new leader election.
If a leader election fails to make progress, the agents move to the next term and retry a new election.
The first term begins with a predetermined leader or a leader election protocol.

In each term, the leader coordinates all agents to generate refined solutions.
The protocol within a term is divided into \emph{rounds}.
All rounds share the same message pattern: The leader distributes the quorum refinement set from the previous round to all agents; each agent performs refinement and sends the refined solution to the leader; the leader aggregates a quorum of refined solutions for the next round.
Leveraging the reasoning refinement (\autoref{assump:refine}) and the quorum intersection properties, the quality of solutions improves monotonically in each round.
Once the leader receives a quorum of solutions in a round, it can output a solution from the refinement set in the previous round.
Such outputs guarantee refinement validity and monotonicity, even in the event of leader failures.
To decide when and which solution to output from refinement sets, the protocol incorporates a modular \emph{refinement decision engine}.
Users can customize the solution selection criteria in the engine.
\autoref{fig:execution-timeline} and \autoref{alg:multiagent-consensus} show the protocol's refinement timeline within each term and the overall workflow.

When a new leader is elected in a new term, the leader collects a quorum of refinement sets before proceeding to the first round of the term.
This ensures that the quality of solutions in the new term exceeds any previous output.
The design resembles existing leader and view change protocols~\cite{oki1988viewstamped,pmmc}.

\subsection{Protocol Specification}
\label{sec:protocol-spec}

\begin{figure}[!t]
 \centering
 \vspace{1pt}
 \includegraphics[width=0.47\textwidth]{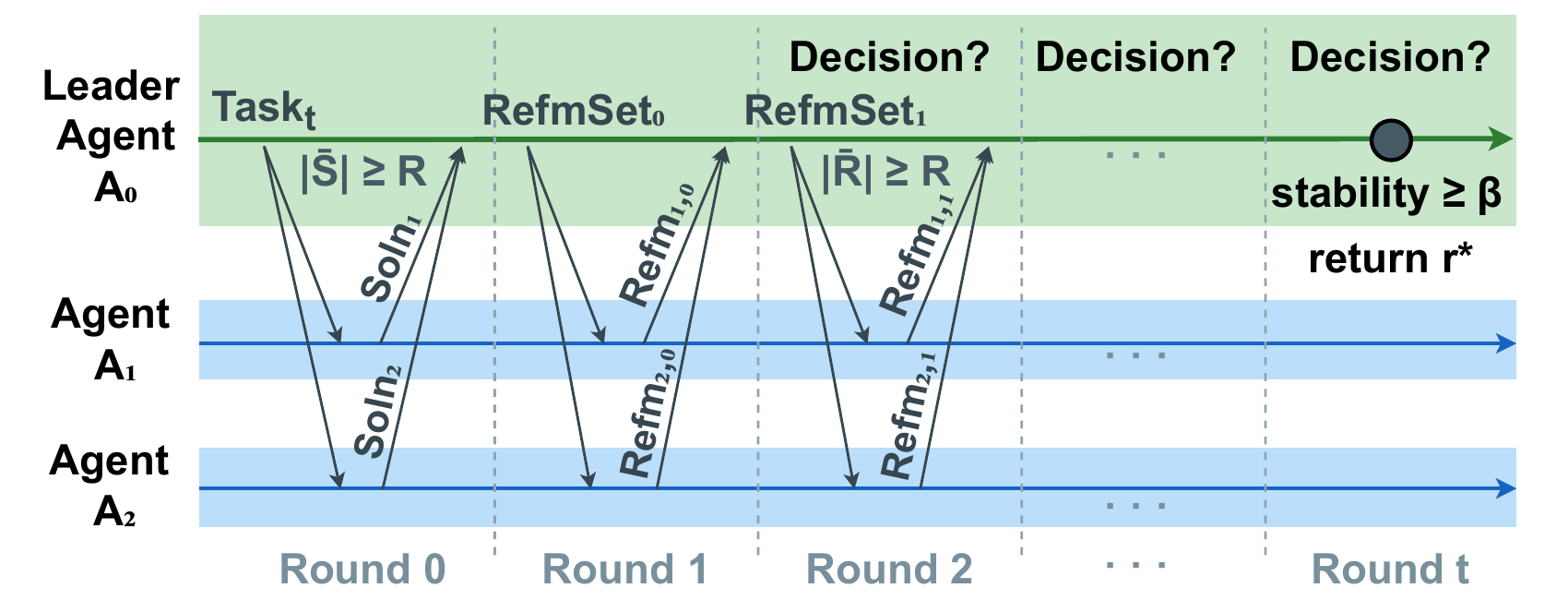}
\caption{Execution timeline of the \sys protocol within one term. Initially, the leader broadcasts the task and gathers solutions until quorum ($|\bar{S}| \geq R$). During refinement, the leader broadcasts the reference set and workers respond with their references. This process repeats until a candidate is established ($|\bar{R}| \geq R$). Finalization occurs when the stability counter reaches $\beta$. For brevity, we abbreviate $\langle$Task, \textit{term-num}, $t\rangle$ as Task$_t$, $\langle$Soln, \textit{term-num}, \textit{id}, $s\rangle$ as Soln$_{id}$, $\langle$RefmSet, \textit{term-num}, \textit{round-num}, $\bar{R}\rangle$ as RefmSet$_{num}$, and $\langle$Refm, \textit{term-num}, \textit{id}, \textit{round-num}, $r\rangle$ as Refm$_{id,num}$.}
 \label{fig:execution-timeline}
\end{figure}

\begin{algorithm}[!h]
\caption{Multi-Agent Refinement Protocol}
\label{alg:multiagent-consensus}
\small
\KwIn{Task $t$; agent pool $\mathcal{A}$}
\KwOut{Refined solution $r^*$}
\BlankLine
$r^* \gets \bot$; $\mathcal{L} \gets \bot$; $\Bar{R} \gets \bot$; round-num $\gets 0$\;\label{line:init}
\If{$\mathcal{L} = \bot$ \textbf{or} $\mathcal{L}$ unreachable}{
    $\mathcal{L} \gets$ LeaderElection()\;\label{line:election}
}
$\mathcal{L}$ broadcasts \msg{Task}{term-num, t}\;\label{line:task-dist}
$\forall a \in \mathcal{A}: s \gets \mathcal{R}_a(t)$; Sends \msg{Soln}{term-num, id, s} to $\mathcal{L}$\;\label{line:sol}
$\mathcal{L}$ collects quorum \mtype{Soln}; $\Bar{R} \gets$ set of $s$ in the \mtype{Soln} quorum\;\label{line:sol-collect}
\While{$r^* = \bot$}{
  round-num $\mathrel{+}= 1$\;\label{line:advance}
  $\mathcal{L}$ broadcasts \msg{RefmSet}{term-num, round-num, $\Bar{R}$}\;\label{line:broadcast}
  $\forall a \in \mathcal{A}: r \gets \mathcal{R}_a(\Bar{R})$; Sends \msg{Refm}{term-num, id, round-num, r} to $\mathcal{L}$\;\label{line:refn}
  $\mathcal{L}$ collects quorum \mtype{Refm}; $\Bar{R} \gets$ set of r in the \mtype{Refm} quorum\;\label{line:collect}
  $\mathcal{L}$ inputs ($\Bar{R}$, round-num) into decision engine\;\label{line:decision}
}
\textbf{return} $r^*$\;\label{line:return}
\end{algorithm}

\begin{table}[!h]
\centering
\small
\caption{Protocol state variables maintained by each agent.}
\begin{tabular}{@{}ll@{}}
\toprule
\textbf{Variable} & \textbf{Description} \\
\midrule
\state{term-num} & Current term number \\
\state{role} & Role: \state{leader}, \state{candidate}, or \state{worker} \\
\state{id} & Unique agent ID \\
\state{round-num} & Current refinement round within term \\
\state{refmset} & Most recent refinement set $\bar{R}$ received \\
\bottomrule
\end{tabular}
\label{tab:agent-state}
\end{table}

\para{Agent protocol state.}
We show the protocol state each agent maintains in \autoref{tab:agent-state}.
In each term, at most one agent can have \state{role} set to \state{leader}.
Agents tag their current \state{term-num} in all protocol messages.

\para{Leader election.}
\sys uses a leader election protocol similar to Raft~\cite{ongaro2014raft} (\autoref{line:election}).
Unlike Raft, any agent is eligible for the leader role in a term;
the leader change protocol ensures protocol safety.
When an agent enters a new term, it broadcasts a \msg{RequestVote}{term-num} message to all agents and transitions to \state{candidate} state.
When an agent receives a \mtype{RequestVote} with a higher \state{term-num}, it advances its \state{term-num}, transitions to \state{worker}, and replies a \msg{Vote}{term-num, id} to the candidate.
Once a candidate receives a quorum of \mtype{Vote} from unique agents (it always votes for itself) for its current term, it transitions to \state{leader} state.
An agent can only cast one \mtype{Vote} in each term.
If the protocol fails to make progress, agents will time out, advance to the next term, and retry leader election.

\para{Refinement protocol.} As shown in \autoref{fig:execution-timeline}, all agents receive an identical task $t$ before the protocol starts.
Details of this task distribution are irrelevant to \sys.
When an agent transitions to the \state{leader} role in the first term, it broadcasts a \msg{Task}{term-num, t} to all agents (\autoref{line:task-dist}).
Upon receiving the message, a worker agent applies its reasoning function $\mathcal{R}$ on task $t$ to derive its initial solution $s$.
Then, it sends a \msg{Solution}{term-num, id, s} to the leader (\autoref{line:sol}).
The leader collects a quorum of \mtype{Solution} from unique agents (including from itself), and broadcasts to all agents a \msg{RefmSet}{term-num, rounds-num, $\Bar{R}$}, where \state{round-num} is set to $1$ and $\Bar{R}$ contains all the solutions in the quorum (\autoref{line:sol-collect}).

Once an agent receives a \mtype{RefmSet} with a \state{round-num} no less than its own, it stores $\Bar{R}$ in its \state{refmset} and updates its \state{round-num}.
It then applies its reasoning function $\mathcal{R}$ on $\Bar{R}$ to derive a refined solution $r$.
Next, the agent sends a \msg{Refm}{term-num, id, round-num, r} to the leader (\autoref{line:refn}).
The leader collects a quorum of \mtype{Refm} from unique agents with \state{round-num} matching its own, and inputs the set of refinements $\Bar{R}$ and the \state{round-num} to the refinement decision engine (\autoref{line:decision}).
Then, the leader advances the \state{round-num} (\autoref{line:advance}) and again broadcasts a \msg{RefmSet}{term-num, round-num, $\Bar{R}$} to all agents. 

The workers and the leader repeat the refinement loop until the refinement decision engine outputs a solution (\autoref{line:return}).
The leader then returns the solution to the client.

\para{Leader change.}
When a new leader is elected, it broadcasts a \msg{NewTerm}{term-num} to all agents.
If an agent receives a \mtype{NewTerm} with a \state{term-num} greater than its own, it replies a \msg{NewTermAck}{term-num', term-num, id, round-num, $\Bar{R}$} to the new leader, where \state{term-num'} is the new term and $\Bar{R}$ is its current \state{refmset}.
It then advances its \state{term-num} to that in \mtype{NewTerm}.
When the new leader receives a quorum of \mtype{NewTermAck} with \state{term-num'} matching its own, it finds the \mtype{NewTermAck} with the highest \state{term-num}, and in case of ties, the one with the highest \state{round-num}.
Next, the leader sets its \state{round-num} to $1$, and broadcasts a \msg{RefmSet}{term-num, round-num, $\Bar{R}$} to all agents, where $\Bar{R}$ is from the selected \mtype{NewTermAck}.
The agents and the leader then proceed with the normal refinement protocol.

\para{Refinement decision engine.}
The protocol defines a refinement decision engine for determining \emph{when} to output a solution and \emph{which} solution to output.
The engine takes as input a \state{term-num}, a \state{round-num}, and a set of refined solutions $\Bar{R}$.
Any solution in a refinement set satisfies refinement validity.
However, to guarantee refinement monotonicity, the engine can only output a solution in round $i$ once it receives a refinement set in round $i+1$.

Beyond the minimum protocol safety requirements, users can define additional criteria for outputting a solution.
For instance, a user can enforce a similarity threshold $\alpha$:
The engine only emits a solution if at least $\alpha$ solutions in a refinement set are semantically equivalent to the solution.
Semantic equivalence is evaluated using standard vector embedding similarity or exact match. Requiring multiple agents to agree filters stochastic errors that affect individual outputs.
A user can also apply a stability horizon threshold $\beta$:
The engine outputs a solution only after the solution or its semantical equivalence appears in $\beta$ consecutive rounds of the refinement sets. Equivalence across rounds is determined using exact match, LLM-as-judge, or both based on tasks. $\beta$-round persistence ensures the agreement survives continued deliberation rather than transient coincidence.

\begin{figure}[!t]
 \centering
 \includegraphics[width=0.48\textwidth]{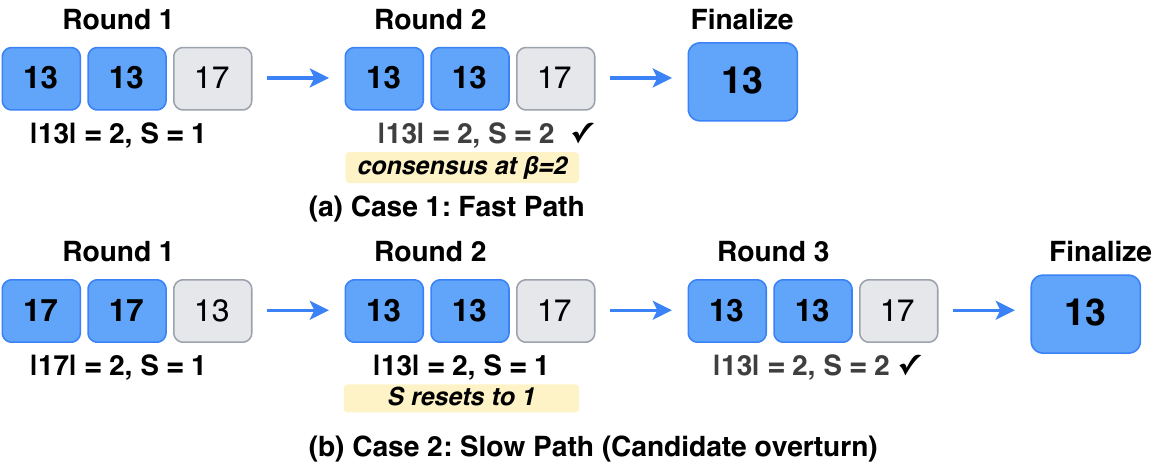}
 \caption{\sys consensus execution examples with $\alpha = 2$ and $\beta = 2$ among 3 agents. Three boxes in each round represent responses from three agents.}
 \label{fig:execution-example}
\end{figure}

\autoref{fig:execution-example} illustrates two consensus scenarios with $\alpha, \beta$ termination condition. Case 1 shows a fast consensus path where two agents initially agree on value 13 in Round 1. This majority remains stable through Round 2, satisfying the $\beta = 2$ stability requirement for finalization. Case 2 demonstrates a slow consensus path where stability validation proves necessary. Two agents initially agree on value 17 in Round 1, but the majority shifts to value 13 in Round 2. This overturn resets the stability counter to 1. By Round 3, value 13 maintains majority support for two consecutive rounds, finally meeting the $\beta = 2$ requirement and allowing finalization.

\subsection{Correctness Proof}
\label{sec:correctness}

\noindent
We now prove that the \sys protocol guarantees multi-agent refinement safety (refinement validity and monotonicity) and liveness (refinement termination) as defined in \autoref{sec:model:problem}.

\begin{lemma}[Refinement Monotonicity]
\label{lem:monotonicity}
If \sys outputs a solution $s'$ after solution $s$ in real time, $s'$ has equivalent or better quality than $s$.
\end{lemma}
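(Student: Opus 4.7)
The plan is to show that after $s$ is output, every refinement set subsequently available to the decision engine consists of solutions of quality at least $Q(s)$; since $s'$ must be drawn from such a set, $Q(s') \geq Q(s)$ follows immediately. The argument factors into an intra-term step, a single inter-term step across one leader change, and an outer induction on the number of term changes that may occur between the two outputs.

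For the intra-term step, I would prove by induction on the round number that within any term, every $r \in \bar{R}_{i+1}$ satisfies $Q(r) \geq Q(\sigma)$ for every $\sigma \in \bar{R}_i$. Each such $r$ is produced by some agent invoking its reasoning function on a string derived from $\bar{R}_i$, so Assumption~\ref{assump:refine} applies element-wise. Combined with the engine rule that a solution in $\bar{R}_{i_s}$ is only released after $\bar{R}_{i_s+1}$ has been collected, every element of $\bar{R}_j$ for $j \geq i_s + 1$ in the same term dominates $s$, which closes the same-term case.

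For a single leader change from term $t_s$ to $t_s+1$, I would invoke quorum intersection. The new leader's quorum of \mtype{NewTermAck} shares at least one agent with the quorum that formed $\bar{R}_{i_s+1}$, since any two quorums of size $\lceil (N+1)/2 \rceil$ in a population with at most $\lceil (N-1)/2 \rceil$ failures must intersect. An agent in the intersection stored $\bar{R}_{i_s}$ as its \state{refmset} when it sent its round-$(i_s+1)$ \mtype{Refm}; any later update to that \state{refmset} is either an advance within term $t_s$ (handled by intra-term monotonicity) or an adoption of a refmset from a later term (handled by the outer induction hypothesis). So this agent reports a refmset of quality at least $Q(s)$. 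For the refmset the new leader actually adopts, chosen by the tie-break on $(\state{term-num}, \state{round-num})$, I would argue that any competitor with a strictly higher pair also carries a refmset of quality at least $Q(s)$, again by combining intra-term monotonicity with the outer induction hypothesis. Applying the intra-term step inside term $t_s+1$ then propagates the bound to every round of that term.

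The main obstacle will be keeping the outer induction clean when many term changes interleave. I would formalize the invariant that at the start of every term $t' \geq t_s$, the leader's adopted refmset has quality at least $Q(s)$, and prove it by strong induction on $t'$, at each step reapplying quorum intersection to the most recent refinement quorum already known to carry the bound. A secondary subtlety is that the engine could in principle emit twice from the same $(\bar{R}, \state{round-num})$ input; I would dispatch this by assuming the user-supplied selection criterion is deterministic on its inputs, so a repeat emission is literally the same solution and monotonicity holds trivially, while any genuinely new emission is drawn from a later refinement set already shown to dominate $s$.
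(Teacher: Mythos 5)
Your proposal is correct and follows essentially the same route as the paper's proof: intra-term monotonicity from \autoref{assump:refine} plus the rule that a round-$i$ solution is released only after the round-$(i{+}1)$ refinement quorum arrives, and inter-term monotonicity from quorum intersection between the new leader's \mtype{NewTermAck} quorum and the refinement quorum that enabled the earlier output. If anything, you are more careful than the paper at the two points it glosses over --- the tie-break may adopt a refmset other than the intersecting agent's, and several (possibly output-free) term changes may intervene, which your strong induction over terms handles explicitly.
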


\begin{proof}
First, we show that \sys outputs at most one solution in a round within a term.
Each term has at most one leader due to the leader election protocol.
Since only a leader can output a solution, and a leader outputs at most one solution in a round, the proposition is true.

Next, we show that within a term, any output solution $s'$ in round $i+1$ has equivalent or better quality than any solution $s$ in round $i$.
We denote the refinement set in round $i$ by $\Bar{R_i}$.
An output solution $s$ in round $i$, if exists, is contained within $\Bar{R_i}$.
In round $i+1$, any refined solution $r$ is generated by performing the agent reasoning function on $\Bar{R_i}$.
Given the reasoning refinement assumption (\autoref{assump:refine}), the quality of $r$ is at least as good as $s$.
As such, any solution in the round $i+1$ refinement set $\Bar{R}_{i+1}$ has quality equal to or better than $s$.
Since any output solution $s'$ in round $i+1$ is contained within $\Bar{R}_{i+1}$, the quality of $s'$ is at least as good as $s$.
Because output solutions in real time order obey round order, the lemma holds within a round.

Lastly, we show that any output solution $s'$ in term $t+1$ has equivalent or better quality than any output solution $s$ in term $t$.
Given the proposition we just proved, we only need to show the case for the last output solution $s$ in term $t$ and the first output solution $s'$ in term $t+1$.
Since $s$ is an output solution, $s$ is contained within a refinement set $\Bar{R}$, and a quorum of agents has stored $\Bar{R}$ locally.
When the leader in term $t+1$ receives a quorum of \mtype{NewTermAck}, at least one \mtype{NewTermAck} contains $\Bar{R}$ due to quorum intersection.
As $s$ is the last output solution in term $t$, the leader will include $\Bar{R}$ in its initial \mtype{RefmSet}.
Following the same reasoning in the proof of the previous proposition, any first output solution $s'$ in term $t+1$ has equivalent or better quality than any solution in $\Bar{R}$, including $s$.
\end{proof}

\begin{lemma}[Refinement Validity]
\label{lem:validity}
If \sys outputs a solution $s$, $s$ satisfies refinement validity.
\end{lemma}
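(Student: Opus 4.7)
The plan is to exhibit a specific majority optimal solution $s^*$ and then show that every output solution produced by \sys has quality at least $Q(s^*)$; since $s^*$ is majority optimal by construction, refinement validity then follows immediately. To pin down $s^*$, I would focus on the first leader that successfully assembles an initial refinement set at \autoref{line:sol-collect}. That leader collects \mtype{Soln} messages from a quorum of unique agents, and because a quorum is a simple majority, the contributing agents constitute some majority $M$; the $\Bar{R}$ assembled at that point is exactly the set of individual solutions produced by the agents in $M$ (each computed as $\mathcal{R}_a(t)$, independent of any peer reasoning). I would then define $s^* := \arg\max_{s \in \Bar{R}} Q(s)$, which by definition is a majority optimal solution for $M$.

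Next, I would run an induction on the round number within a single term to show that every entry of the leader's $\Bar{R}$ at the end of round $i$ has quality at least $Q(s^*)$. The base case is the initial $\Bar{R}$, which contains $s^*$ itself. For the inductive step, every element of $\Bar{R}$ in round $i+1$ is produced by applying $\mathcal{R}_a$ to a string concatenating the round-$i$ refinement set (\autoref{line:refn}); by \autoref{assump:refine} its quality is at least that of every input element, and by the inductive hypothesis the input contains at least one element of quality $\geq Q(s^*)$. Since any in-term output solution is drawn from such a $\Bar{R}$, the bound $Q(\cdot) \geq Q(s^*)$ transfers to every output within the term.

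To lift the argument across term changes, I would use the \mtype{NewTermAck} mechanism essentially as in the proof of \autoref{lem:monotonicity}. A new leader's starting $\Bar{R}$ is the \state{refmset} of the \mtype{NewTermAck} with the largest (term-num, round-num) in its quorum; by quorum intersection with any earlier quorum that produced a $\Bar{R}$, that \state{refmset} is itself the $\Bar{R}$ of some earlier round, whose entries all satisfy the bound by the intra-term induction. The intra-term induction then restarts in the new term from a base that already has quality $\geq Q(s^*)$, and chaining across terms shows every output solution has quality $\geq Q(s^*)$.

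The step I expect to be the main obstacle is the corner case in which every leader up to some point crashes before reaching \autoref{line:sol-collect}, so no agent holds a non-$\bot$ \state{refmset} and the inherited field carries nothing usable. Nothing can have been output in such a regime, so refinement validity is vacuous there; I would therefore anchor the choice of $s^*$ to the first term that actually executes \autoref{line:sol-collect}, taking $M$ to be the quorum that assembles that very first $\Bar{R}$. Once this indexing is pinned down, the chained induction above delivers $Q(s) \geq Q(s^*)$ for every output $s$, and because $s^*$ is a majority optimal solution for $M$ this is exactly refinement validity.
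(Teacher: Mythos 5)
Your proposal is correct and follows essentially the same route as the paper: both take $s^*$ to be the highest-quality solution in the first leader's quorum of initial solutions, observe that it is by definition a majority optimal solution, and propagate the bound $Q(\cdot)\ge Q(s^*)$ to every output via \autoref{assump:refine}. The only difference is that the paper discharges everything after the first round by simply citing \autoref{lem:monotonicity}, whereas you re-derive that propagation explicitly (the intra-term round induction plus the \mtype{NewTermAck}/quorum-intersection argument across terms) — the same machinery inlined rather than invoked, with your explicit handling of the ``no refinement set yet'' corner case being a small bonus the paper leaves implicit.
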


\begin{proof}
Prove by induction.
We first prove the base case.
The first leader collects a quorum of solutions $\Bar{S}$ from distinct agents.
Suppose the highest quality solution in $\Bar{S}$ is $s^*$, $s^*$ is then a majority optimal solution.
In the first round, any refined solution $r$ is generated by performing the agent reasoning function on $\Bar{S}$.
Given the reasoning refinement assumption (\autoref{assump:refine}), the quality of $r$ is at least as good as any solution in $\Bar{S}$, including $s^*$.
Since any output solution $s$ in the round is selected from these refined solutions, $s$ is at least as good as $s^*$.
This proves the base case.
The inductive step is trivially true following \autoref{lem:monotonicity}.
\end{proof}

\begin{lemma}[Refinement Termination]
\label{lem:termination}
Eventually, \sys outputs a solution $s$.
\end{lemma}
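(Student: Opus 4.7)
The plan is to reduce termination to three sub-arguments: (i) after the global stabilization time (GST) of the partially synchronous model, some term will elect a correct leader that is not displaced, (ii) within such a term, every refinement round completes in bounded time, and (iii) under the reasoning refinement assumption, the refinement decision engine eventually emits a solution.

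First I would argue eventual leader stability. The \sys leader election mirrors Raft~\cite{ongaro2014raft} and tolerates up to $\lceil (N-1)/2 \rceil$ fail-stop faults, leaving a quorum of correct agents. After GST, message delay is bounded, so any election initiated by a correct candidate gathers a quorum of \mtype{Vote} messages before the timeout fires. If the winning candidate is faulty, it fail-stops and triggers another election in a higher term; since the number of faulty agents is finite, after finitely many terms a correct leader is elected and, by the bounded delay and matching terms, no subsequent election can succeed to displace it. Second, within a stable-leader term, each refinement round terminates: the leader broadcasts \mtype{RefmSet}, every correct agent applies its reasoning function $\mathcal{R}$ (assumed to return in finite time, since a diverging agent is treated as fail-stopped and counted against $f$), and the leader collects a quorum of matching \mtype{Refm} within bounded delay, then re-invokes the decision engine.

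Third, I would show the decision engine eventually outputs. For the minimal engine, which only enforces refinement monotonicity by waiting one extra round, termination is immediate after two successful rounds under a stable leader. For richer criteria such as a similarity threshold $\alpha$ and stability horizon $\beta$, I would invoke \autoref{lem:monotonicity} together with an additional well-foundedness argument: the quality function $Q$ takes values in a set that, over the refinement sets produced in successive rounds, forms a monotonically non-decreasing sequence; assuming $Q$'s range is bounded (e.g., normalized to a compact interval), the quality sequence must converge, after which \autoref{assump:refine} forces the refined solutions to cluster around a common optimum, satisfying any fixed $\alpha$ and $\beta$ thresholds within a bounded number of additional rounds.

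The main obstacle is the third step, because the decision engine is user-customizable and a pathological criterion (e.g., $\alpha = N$ when two correct agents persistently disagree on semantically distinct but equally optimal solutions) can be unsatisfiable. I expect the cleanest resolution is to explicitly constrain admissible engines by an assumption of the form: \emph{if the quality sequence of refinement sets stabilizes, the engine emits a solution within a bounded number of rounds thereafter.} The $\alpha, \beta$ engines instantiated in the paper satisfy this assumption, so the lemma then follows by chaining the three sub-arguments: a stable correct leader emerges, refinement rounds proceed indefinitely, quality monotonicity drives convergence of the refinement sets, and the engine fires an output in bounded additional rounds.
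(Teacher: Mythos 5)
Your core argument matches the paper's: partial synchrony bounds message delay after GST, the leader election eventually installs a correct leader (at most $f \leq \lceil \frac{N-1}{2} \rceil$ fail-stop agents, so a live quorum always exists), each refinement round then completes because the leader collects a quorum of \mtype{Refm} within bounded time, and an output becomes possible after two rounds. Where you diverge is the third step. The paper's proof stops at ``after two refinement rounds, the leader can output a solution'' --- that is, it establishes termination only with respect to the protocol's minimum requirement (the engine may emit a round-$i$ solution once the round-$(i{+}1)$ refinement set arrives) and never argues that user-customized $\alpha$/$\beta$ criteria are eventually satisfied. You instead try to prove the stronger statement that the configurable engine fires, and your own caveat is the right one: a pathological criterion can indeed be unsatisfiable, so some restriction on admissible engines (or a fallback such as the $T_{\max}$ round cap used in the evaluation) is needed, which is a real gap the paper's one-line treatment glosses over. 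Note, however, that your intermediate step is not sound as stated: monotone, bounded quality values converging does not force the refined solutions to cluster semantically, since agents can persistently emit semantically distinct solutions of equal quality, so convergence of $Q$ alone does not discharge an $\alpha$-similarity or $\beta$-stability condition --- which is exactly why you end up needing the explicit admissibility assumption anyway. In short, your proof is at least as strong as the paper's (it contains the paper's argument as the ``minimal engine'' special case), but the extra quality-convergence machinery should either be dropped in favor of the paper's weaker claim or replaced directly by the admissible-engine assumption you propose.
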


\begin{proof}
\sys assumes a partially synchronous network model.
After periods of asynchrony, all messages are delivered within bounded time.
The leader election protocol is guaranteed to elect a leader.
Given that at most $f$ agents can fail, eventually a correct agent becomes the leader.
Since at least a quorum of agents are alive, this leader can collect a quorum of \mtype{RefmSet} within a round.
After two refinement rounds, the leader can output a solution.
\end{proof}

\begin{theorem}[Correctness]
\label{theorem:correctness}
The \sys protocol correctly implements multi-agent refinement.
\end{theorem}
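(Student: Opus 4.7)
The plan is to treat \autoref{theorem:correctness} as a direct composition of the three preceding lemmas, since the definition of a correct multi-agent refinement protocol introduced in \autoref{sec:model:problem} decomposes cleanly into exactly the properties those lemmas establish. Specifically, the problem formulation requires \textbf{Refinement Validity}, \textbf{Refinement Monotonicity} (replacing classical agreement), and \textbf{Refinement Termination}. My proof would open by restating this triplet of requirements, and then simply discharge each by citation.

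First, I would invoke \autoref{lem:validity} to conclude that every output solution satisfies refinement validity, i.e., its quality is at least that of some majority optimal solution. Second, I would invoke \autoref{lem:monotonicity} to conclude that the real-time sequence of output solutions is non-decreasing in quality under the oracle $Q$, which is the relaxed agreement property the model demands. Third, I would invoke \autoref{lem:termination} to conclude that, under the partially synchronous network assumption and the bound $f \leq \lceil \frac{N-1}{2} \rceil$ on fail-stop failures, at least one correct agent eventually produces an output.

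There is no genuinely new technical content at this stage, so the proof is essentially a one-line assembly. The only subtlety worth flagging explicitly is that the three lemmas were proved under a consistent set of hypotheses, namely the reasoning refinement assumption (\autoref{assump:refine}), the failure and network models of \autoref{sec:model:problem}, and the leader-election and quorum-intersection properties of the Raft-style term mechanism in \autoref{sec:protocol-spec}. I would note that these are precisely the hypotheses under which \autoref{theorem:correctness} is claimed, so the composition goes through without additional conditions.

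The main ``obstacle'', if any, is not in the composition itself but in confirming that my reading of the specification matches the lemmas: in particular, that \autoref{lem:monotonicity} correctly stands in for the classical agreement property (since the authors explicitly relaxed agreement to monotonicity in \autoref{sec:model:problem}), and that the decision-engine knobs $\alpha$ and $\beta$ introduced in \autoref{sec:protocol-spec} do not weaken validity or monotonicity but only further restrict when the engine emits. I would remark briefly that $\alpha$ and $\beta$ only delay outputs, never forcing one, so they preserve the safety lemmas while leaving termination contingent on the engine's user-chosen policy being eventually satisfiable; under the minimal policy implicit in the lemmas' proofs this is immediate, completing the theorem.
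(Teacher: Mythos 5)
Your proposal is correct and matches the paper's own proof, which simply notes that \autoref{theorem:correctness} follows directly from \autoref{lem:monotonicity}, \autoref{lem:validity}, and \autoref{lem:termination}. Your extra remarks on hypothesis consistency and on the decision-engine parameters $\alpha$, $\beta$ only delaying (never forcing) outputs are sound observations but add nothing beyond the paper's one-line assembly.
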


\begin{proof}
The correctness theorem follows directly from \autoref{lem:monotonicity}, \autoref{lem:validity}, and \autoref{lem:termination}.
\end{proof}

\section{Consensus-Aware Serving Infrastructure}
\label{sec:runtime}

\noindent
The protocol specification in \autoref{sec:protocol} builds formal consensus but abstracts away execution concerns. This section presents \sys-Serve, a serving infrastructure that exposes consensus semantics to the scheduling layer, enabling optimizations that neither consensus-oblivious serving systems nor serving-oblivious consensus implementations can achieve.

\subsection{System Architecture}
\label{subsec:arch}

\begin{figure}[!t]
    \centering
    \includegraphics[width=0.45\textwidth]{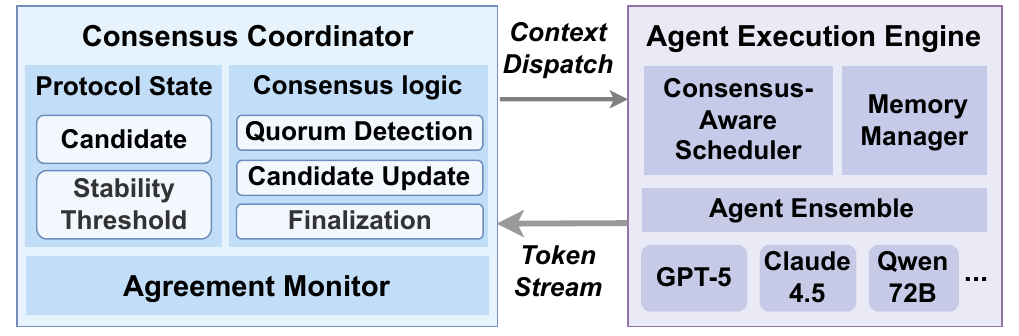}
    \caption{The architecture of \sys-Serve. The consensus coordinator maintains protocol state and detects quorum. It directs the agent execution engine, which optimizes resource usage through ensemble-aware scheduling.} 
    \label{fig:serving-arch}
\end{figure}

\noindent
As shown in \autoref{fig:serving-arch}, \sys employs a two-layer architecture. The \textit{consensus coordinator} maintains protocol state including the current candidate answer, stability threshold, and round history. It executes the consensus logic from \autoref{alg:multiagent-consensus}, tracking when a value achieves quorum and monitoring progress toward the stability horizon. Its embedded \textit{agreement monitor} tracks agent responses incrementally and evaluates quorum predicates as answers arrive, enabling early termination without barrier synchronization. The \textit{agent execution engine} manages inference through a \textit{consensus-aware scheduler} that treats agent ensembles as first-class scheduling units, and a \textit{memory manager} that handles KV cache allocation across the heterogeneous agent pool.

\para{Universal engine abstraction.}
To enable \sys's optimizations across diverse LLM backends with varying model architectures, tokenizers, and memory layouts, we define a minimal interface that execution engines must implement:

\begin{lstlisting}[style=PythonStyle]
def Dispatch(q: str, ctx: ConsensusCtx, eid: int) -> Handle
def OnComplete(h: Handle, cb: Callable[[Answer], None])
def Cancel(h: Handle) -> bool
def QueryEnsemble(eid: int) -> EnsembleState
\end{lstlisting}

These methods expose both inference functionality and consensus-aware coordination. The \texttt{Dispatch} method initiates agent execution with attached consensus context (current candidate, stability threshold, round number), returning a handle for subsequent operations. The \texttt{OnComplete} method registers a callback invoked upon generation completion, delivering the normalized answer to the agreement monitor. The \texttt{Cancel} method terminates in-progress generation for early quorum exploitation, returning resources to the memory pool. Finally, \texttt{QueryEnsemble} retrieves aggregate state for all agents sharing an ensemble identifier, enabling collective admission decisions and ensemble-aware preemption policies. This abstraction decouples consensus logic from backend-specific implementation details, allowing \sys to integrate with systems like vLLM~\cite{vllm2025} with minimal adaptation.

The layers interact through an event-driven design. The coordinator dispatches execution requests annotated with consensus context. As agents complete generation, the engine delivers normalized answers to the agreement monitor. When the monitor detects that at least $\alpha$ agents agree on the same answer and meet the termination criteria, it signals the coordinator to update the candidate. It also issues cancellation directives for in-progress agents whose answers can no longer affect the round outcome.

\subsection{Ensemble-Aware Scheduling}
\label{subsec:ensemble-scheduling}

\noindent
Traditional LLM serving systems treat each inference request as an independent scheduling unit. This abstraction is insufficient for consensus-based multi-agent execution, where the outputs of multiple agents are collectively evaluated. We introduce the \textit{agent ensemble} as a first-class scheduling primitive: a set of $n$ inference requests that share a common query, whose outputs will be aggregated for quorum evaluation.

\para{Collective admission and scheduling.} 
Rather than admitting requests individually, when a new query arrives, the scheduler evaluates whether sufficient resources exist for the entire ensemble before admitting any member. This collective policy prevents scenarios where partially admitted ensembles consume resources while waiting indefinitely for peers that cannot be scheduled. Specifically, the scheduler checks that GPU memory can accommodate $n$ concurrent KV caches and that system load allows at least $\alpha$ agents to complete within the round timeout.
Once admitted, the scheduler co-schedules the ensemble to synchronize their progress through the decode phase. This minimizes completion time variance, reducing the window during which completed agents hold resources while awaiting others and enabling earlier quorum detection.

\para{Ensemble-aware termination and preemption.}
The agreement monitor implements progressive cancellation to exploit early consensus within an ensemble. Upon each agent completion, the monitor increments the support count for the returned answer. When a solution receives $\alpha$ votes and sustains for $\beta$ rounds, the coordinator declares round termination and cancels in-progress agents through the scheduler's abort interface, returning their KV cache blocks to the free pool immediately.

Also, under memory pressure, request-level preemption policies can fragment ensembles by evicting some agents while others continue, potentially leaving insufficient agents for a quorum. Our ensemble-aware policy treats the ensemble atomically: Before preempting any agent, the scheduler verifies that at least $\alpha$ agents would survive. If this condition fails, the scheduler either preempts the entire ensemble for later retry or protects near-completion ensembles from preemption.

\label{subsec:protocol-aware}
\para{Protocol-aware resource management.}
The protocol's consensus state enables predictive resource allocation. Once a candidate is established with $s$ consecutive rounds of stability, at most $S-s$ additional rounds remain before finalization. The coordinator exploits this bounded horizon by sending \textit{reservation hints} specifying expected remaining rounds and reduced ensemble size ($R+1$ agents rather than $n$), preventing nearly-finalized queries from being starved by new arrivals. Conversely, upon finalization, an \textit{eager release directive} triggers immediate resource reclamation rather than waiting for garbage collection.

\subsection{Failure Handling}
\label{subsec:failures}

\noindent
The coordinator detects hard failures (crashed processes) through periodic heartbeats and soft failures (stalled generation) through completion timeouts. Upon failure, if the remaining healthy agents $H_t$ satisfy $|H_t| \geq \alpha$, execution continues normally with the failed agent marked as non-participating. When $|H_t| < \alpha$, the coordinator aborts and restarts if no candidate exists; otherwise, it preserves the current candidate and transitions to the next round with a fresh ensemble. This policy ensures failures never violate safety: A candidate value remains a candidate, and finalization occurs only after the required $S$ consecutive confirmations, regardless of failures. 

\section{Experiment}
\label{sec:eval}

\noindent
We implement \sys atop vLLM~v0.10~\cite{vllm2025}, modifying around  2000 lines of Python code. 

\subsection{Experimental Setup} 
\label{sec:exp-setup}

\para{Testbed.}
To evaluate \sys under realistic serving conditions, we deploy the system on 8$\times$ NVIDIA H100 GPUs (80GB) using vLLM v0.10~\cite{vllm2025} as the inference backend for locally hosted models. For commercial model evaluation, we access them through the OpenRouter API~\cite{openrouter2025}, which provides a unified interface to multiple providers while preserving their native latency characteristics.

\para{Agent ensemble.} 
We build an agent pool to reflect production deployments where models of varying capabilities collaborate. The pool comprises five open-source models deployed locally: Llama-3.1-8B-Instruct~\cite{llama3-8b}, Qwen3-8B~\cite{qwen3-8b}, GPT-OSS-20B~\cite{gptoss20b}, Mistral-7B-Instruct~\cite{mistral7b}, DeepSeek-Qwen-2.5-7B~\cite{deepseek_qwen7b}, each allocated to a dedicated H100. We also evaluate three commercial models of comparable capability (GPT-5-mini~\cite{openai_gpt5mini_2025}, Gemini-2.5-flash~\cite{google_gemini25flash_2025}, Claude-4.5-Haiku~\cite{anthropic_claudehaiku45_2025}) via public API to demonstrate \sys's generalizability beyond local deployments.

\para{Benchmarks and workload.}
We select four widely-adopted reasoning benchmarks for multi-agent systems of varying difficulty: GSM8K~\cite{cobbe2021gsm8k}, 
MMLU~\cite{hendrycks2021ethics,hendryckstest2021}, 
AIME~\cite{aime2024} 
and IMO datasets~\cite{luong-etal-2025-towards}. This diversity allows us to examine whether \sys adapts computational effort to problem complexity. To simulate production traffic patterns, query submitters generate Poisson arrivals with varying request rates.

\para{Baselines and configuration.}
We compare \sys against the following baselines. \textit{Multi-Agent-Base} faithfully reproduces the core mechanisms of state-of-the-art multi-agent systems~\cite{chen2025tumix,li2025parallelmuse,chen2025sets,du2023mad,kaesberg2025voting} like TUMIX~\cite{chen2025tumix}. Multi-Agent-Base adopts per-round barrier synchronization where all agents must complete before proceeding, heuristic-based round limits, and majority voting for final answer selection. \textit{Best/Worst single model} selects the highest/lowest-accuracy model within the LLM ensemble of each experiment, establishing single-model performance bounds.
We implement all baselines in a unified codebase with the same agent ensemble; \sys differs only in consensus detection and termination strategy. Unless otherwise specified, \sys uses default parameters: $n{=}3$ agents, $\alpha{=}2$ (majority quorum size), $\beta{=}2$ rounds (stability horizon threshold), $T_{\max}{=}5$ rounds.

\para{Metrics.}
We report metrics that capture both efficiency and accuracy. Average latency and P99 tail latency characterize typical and worst-case response times, respectively. Accuracy measures the fraction of problems answered correctly.

\begin{figure*}[!t]
    \centering
    \includegraphics[width=0.9\textwidth]{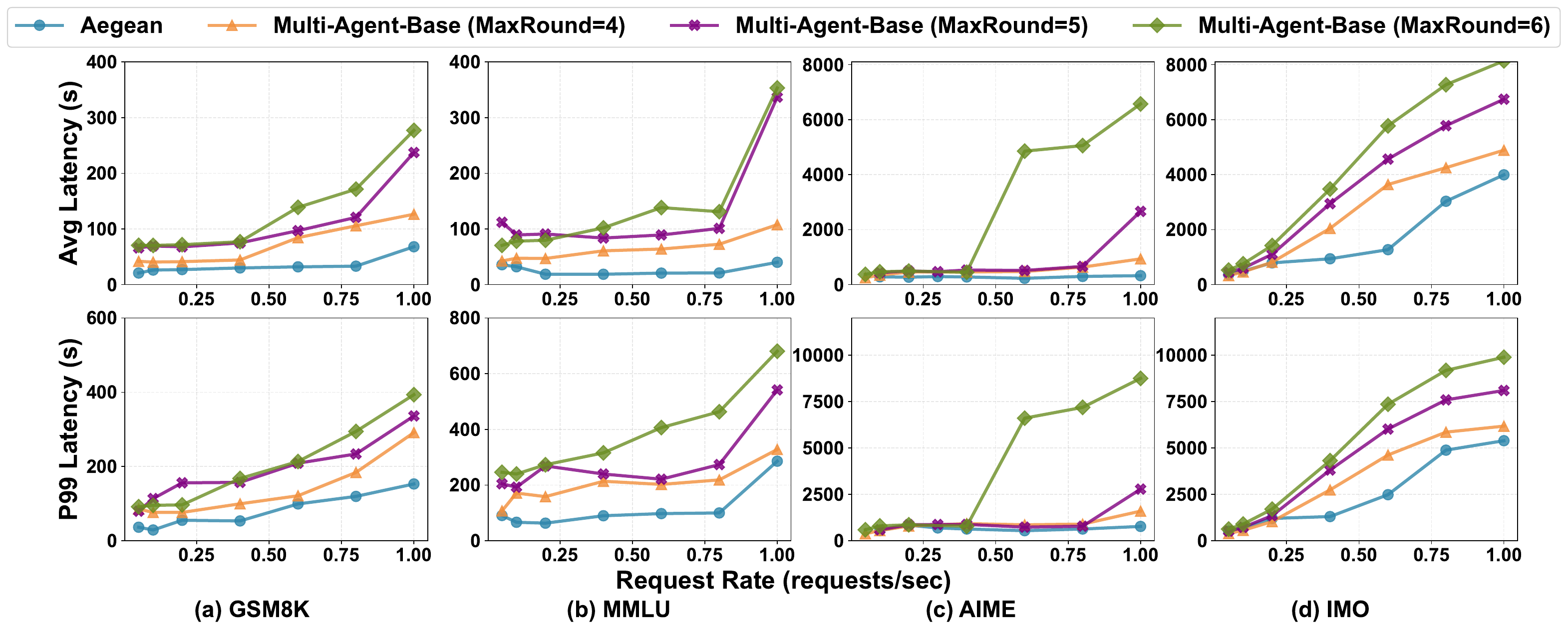}
    \caption{Throughput versus latency on locally deployed models across four benchmarks, comparing \sys to Multi-Agent-Base with varying maximum rounds.}
    \label{fig:perf-local}
\end{figure*}

\subsection{End-to-End Performance}
\label{sec:end-to-end}

\noindent
Our evaluation first verifies that \sys achieves its primary design goal of reducing latency while preserving accuracy compared to baseline systems.

\para{Local model deployment.}
\autoref{fig:perf-local} compares \sys against Multi-Agent-Base with varying maximum rounds (4, 5, 6) on four reasoning benchmarks using locally deployed models. The top row presents average latency while the bottom row shows P99 latency, capturing both typical and tail-case performance characteristics.

For average latency, \sys maintains consistently low values even as request rate increases from 0.1 to 1.0 requests/sec. On simpler benchmarks like GSM8K and MMLU, \sys sustains latency below 100s throughout all load levels, while Multi-Agent-Base degrades substantially at 1.0 requests/sec: On GSM8K, MaxRound=4/5/6 configurations reach 126s, 237s, and 277s respectively (1.9$\times$, 3.5$\times$, 4.1$\times$ higher than \sys); on MMLU, they reach 108s, 337s, and 353s (2.7$\times$, 8.4$\times$, 8.8$\times$ higher). For more challenging benchmarks, \sys achieves latency around 325s on AIME and 3993s on IMO at the highest request rate. In contrast, Multi-Agent-Base exhibits severe latency degradation under load: On AIME at 1.0 requests/sec, MaxRound=4/5/6 reach 937s, 2662s, and 6571s (2.9$\times$, 8.2$\times$, and 20.2$\times$ higher); on IMO, they reach 4889s, 6743s, and 8138s (1.2$\times$, 1.7$\times$, 2.0$\times$ higher).

The P99 latency results further highlight the advantage of quorum-based execution. On GSM8K and MMLU, \sys maintains P99 latency of 153s and 286s at maximum load. In comparison, Multi-Agent-Base shows consistently higher tail latency: On GSM8K, MaxRound=4/5/6 reach 291s, 336s, and 394s (1.9$\times$, 2.2$\times$, and 2.6$\times$ higher); on MMLU, they reach 328s, 542s, and 681s (1.1$\times$, 1.9$\times$, and 2.4$\times$ higher). The gap widens dramatically on harder benchmarks: At 1.0 requests/sec, \sys achieves P99 latency of 772s on AIME, while MaxRound=4/5/6 reach 1595s, 2782s, and 8749s (2.1$\times$, 3.6$\times$, and 11.3$\times$ higher). On IMO, \sys achieves 5391s compared to 6175s, 8097s, and 9888s for MaxRound=4/5/6 (1.1$\times$, 1.5$\times$, and 1.8$\times$ higher). The controlled tail latency shows that \sys provides bounded worst-case performance, critical for meeting service level objectives in production.

\autoref{fig:perf-local} reveals distinct saturation behavior between the two approaches. Multi-Agent-Base shows exponential latency growth as the system queues requests waiting for slow agents to complete all rounds. In contrast, \sys's quorum-based termination processes queries without waiting for stragglers, resulting in sub-linear growth under increasing load. The performance gap becomes more pronounced on harder benchmarks like AIME and IMO, where individual agent response times exhibit higher variance.

\begin{figure}[!t]
    \centering
    \includegraphics[width=0.45\textwidth]{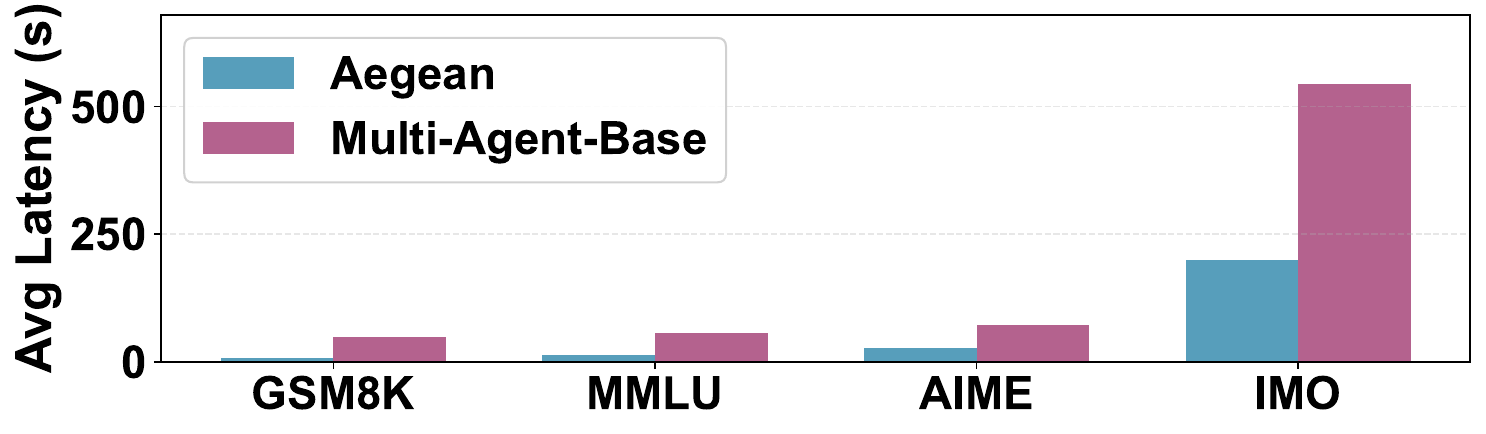}
    \caption{Average latency comparison using commercial frontier models accessed via public API.}
    \label{fig:perf-api}
\end{figure}

\para{Public API deployment.} Given that production systems increasingly rely on commercial APIs, we further evaluate on cloud provider APIs.
\autoref{fig:perf-api} evaluates the same systems using commercial frontier models (GPT-5-mini, Claude-4.5-haiku, Gemini-2.5-flash) accessed via OpenRouter API. \sys achieves significant latency reduction across all benchmarks: 8.0s vs 49.2s on GSM8K (6.2$\times$ speedup), 13.8s vs 57.2s on MMLU (4.1$\times$ speedup), 26.4s vs 72.5s on AIME (2.7$\times$ speedup) and 198.8s vs 543.6s on IMO (2.7$\times$ speedup). We use a fixed request rate of 0.5 requests/sec for API experiments because commercial providers maintain large GPU clusters that absorb load fluctuations, resulting in stable per-request latency. The latency improvements instead stem from two inherent characteristics of commercial APIs. First, commercial APIs exhibit higher variance in response times due to shared infrastructure and rate limiting, making straggler delays more pronounced in wait-all systems. Second, frontier models generate longer responses on average, amplifying the cost of waiting for all agents to complete. \sys's consensus protocol effectively masks this variance by committing as soon as a quorum converges, providing predictable latency regardless of underlying API fluctuations. The consistent latency reduction across both deployment modes confirms that quorum-fast execution addresses the performance bottleneck in multi-agent systems.

\begin{figure}[!t]
    \centering
    \includegraphics[width=0.48\textwidth]{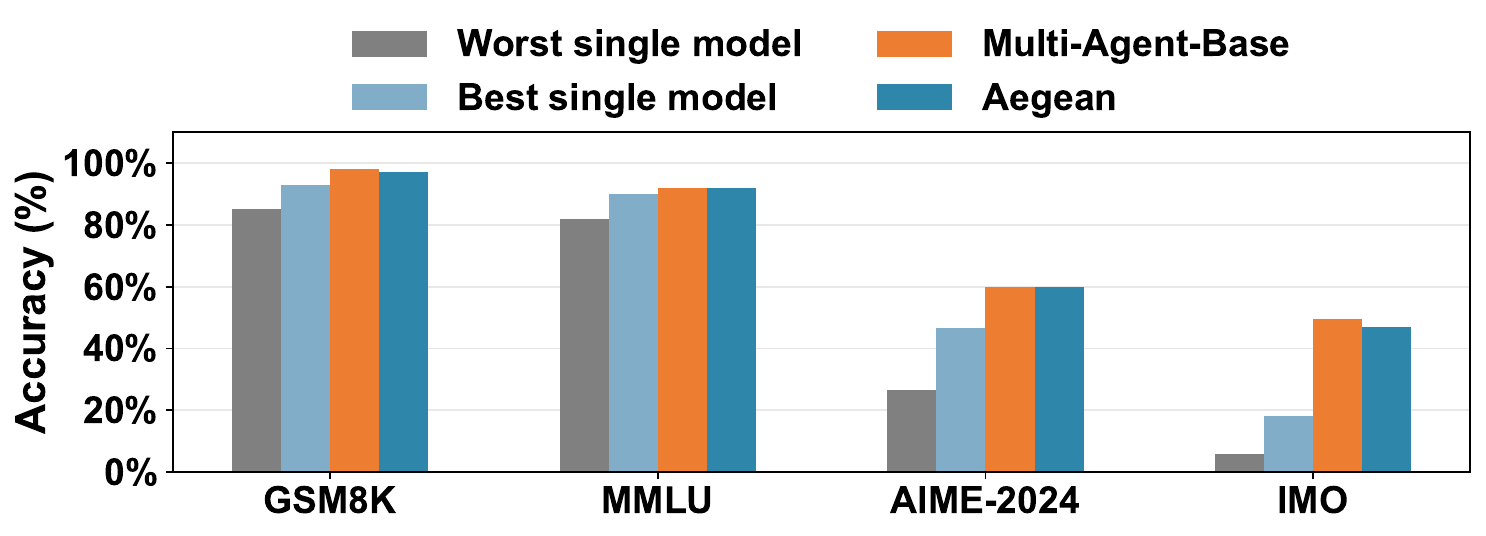}
    \caption{Accuracy comparison among best single model, Multi-Agent-Base and \sys.}
    \label{fig:acc-overall}
\end{figure}

\para{Accuracy results.} 
Critically, \sys maintains accuracy comparable to Multi-Agent-Base while achieving these latency improvements. 
\autoref{fig:acc-overall} compares accuracy across four benchmarks for worst/best single models, Multi-Agent-Base, and \sys. 
On GSM8K, both multi-agent approaches significantly outperform single models, with Multi-Agent-Base achieving 98\% and \sys achieving 97\% compared to 93\% for the best single model. 
MMLU shows similar patterns. On more challenging benchmarks, the benefits of multi-agent collaboration become more pronounced. For AIME, both multi-agent approaches achieve 60\% accuracy compared to 46.7\% for the best single model, representing a 28\% relative improvement. IMO exhibits the largest gap: Multi-Agent-Base reaches 49.5\% and \sys achieves 47\%, compared to only 18\% for the best single model. 
These results show that quorum-based early termination eliminates computational waste without sacrificing correctness. \sys matches Multi-Agent-Base accuracy within 2.5\% across all benchmarks while substantially reducing latency. The small accuracy gap reflects a fundamental design choice: \sys commits once a quorum agrees rather than waiting for all agents. For accuracy-critical applications, operators can increase $\alpha$ to require stronger agreement.
    
\subsection{Deep Dive}
\label{sec:ablation}

\begin{figure}[!t]
    \centering
    \includegraphics[width=0.48\textwidth]{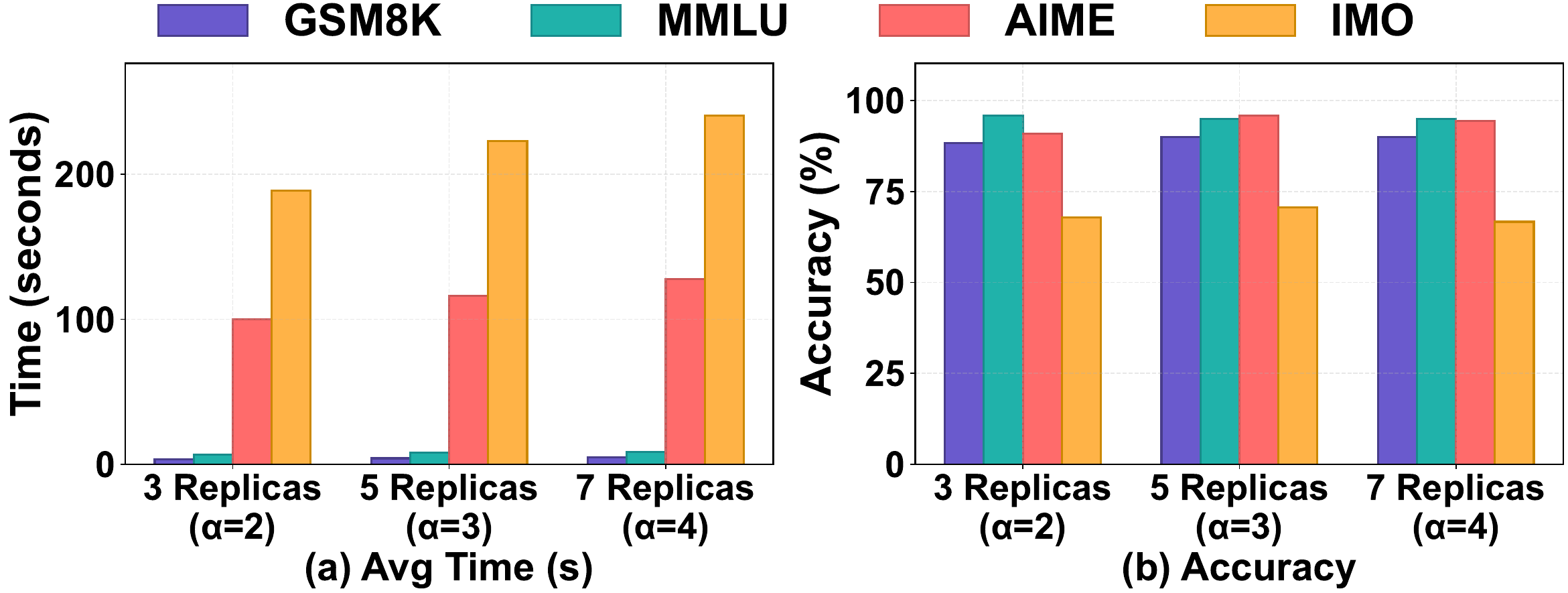}
    \caption{Impact of ensemble size on consensus performance using homogeneous replicas.}
    \label{fig:replica_count_study}
\end{figure}

\noindent
Beyond end-to-end performance, we now examine how \sys's three primary parameters (ensemble size ($N$), stability horizon ($\beta$), and quorum threshold ($\alpha$)) affect the latency-accuracy tradeoff. We vary each parameter while holding the others constant.

\para{Impact of ensemble size.}
First, we evaluate how consensus performance scales with ensemble size by varying the number of replicas from 3 to 7 using the same underlying model. We deliberately employ homogeneous Qwen3-8B replicas in this ablation to isolate the effect of ensemble size from model diversity. Using heterogeneous models would introduce a confounding factor, as performance changes could stem from either the increased replica count or the addition of models with different capabilities. By holding the model constant, we can measure how the consensus protocol's coordination overhead scales with the number of participants.
As shown in \autoref{fig:replica_count_study}, the system exhibits favorable sub-linear scaling across benchmarks of varying difficulty. For simpler benchmarks, average time per problem increases modestly from 3.6s to 4.8s on GSM8K and from 6.9s to 8.5s on MMLU when scaling from 3 to 7 replicas, representing only 33\% and 23\% overhead, respectively. The sub-linear pattern extends to challenging benchmarks: AIME increases from 99.9s to 127.7s (28\% overhead) and IMO grows from 188.8s to 240.4s (27\% overhead), indicating that coordination costs do not compound with problem difficulty. Accuracy remains stable or improves slightly across all configurations. GSM8K improves from 88.3\% to 90\%, while MMLU maintains consistent performance between 95-96\%. On AIME, accuracy peaks at 95.8\% with 5 replicas, suggesting that moderate ensemble sizes can enhance correctness on challenging problems.

\begin{figure}[!t]
    \centering
    \includegraphics[width=0.45\textwidth]{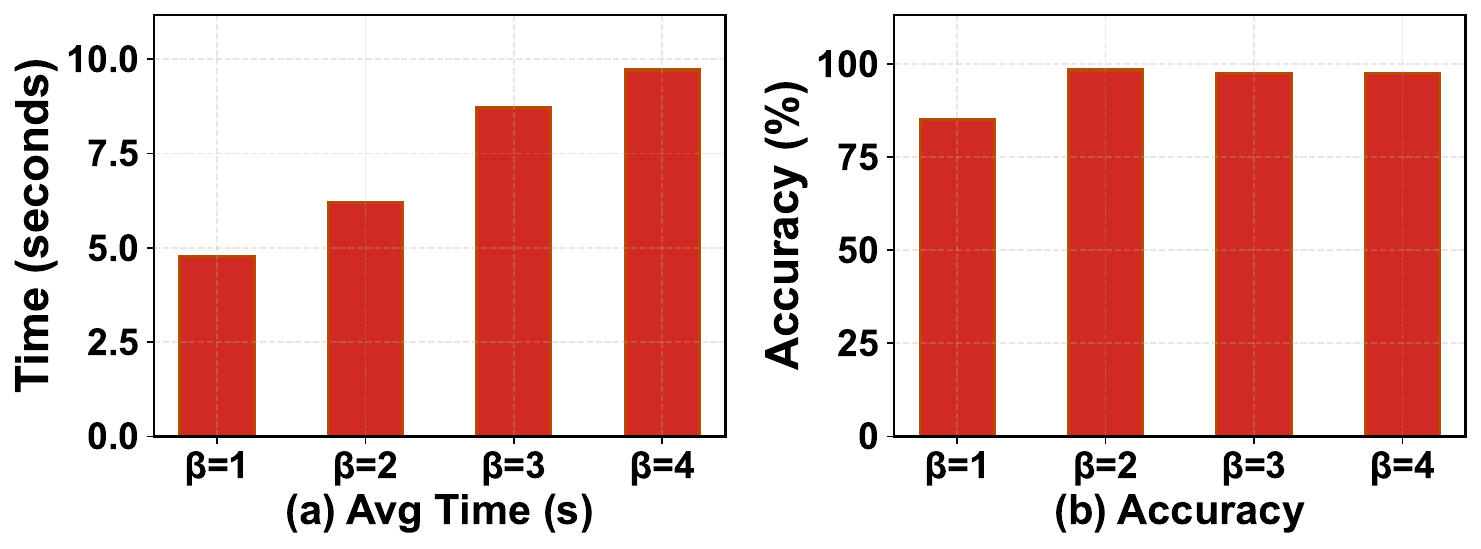}
    \caption{Impact of stability parameter $\beta$ on consensus performance with 3 models and $\alpha$=2.} 
    \label{fig:stability_study}
\end{figure}

\para{Impact of stability horizon.}
Second, we vary the stability horizon $\beta$ to understand its effect on convergence with the GSM8K dataset. \autoref{fig:stability_study}-(a) shows that average time per problem increases linearly from 4.8s at $\beta=1$ to 9.7s at $\beta=4$. However, the accuracy results in \autoref{fig:stability_study}-(b) reveal a more nuanced picture. At $\beta=1$, accuracy drops to 85\%, indicating premature termination before models reach true consensus. In contrast, $\beta=2$, $\beta=3$, and $\beta=4$ all achieve approximately 98\% accuracy, suggesting they allow sufficient time for stable agreement. Increasing $\beta$ beyond 2 provides diminishing returns. While $\beta=3$ and $\beta=4$ maintain the same accuracy as $\beta=2$, they incur substantial time penalties. We therefore select $\beta=2$ for 3 agents as our default, balancing accurate consensus with convergence efficiency.

\begin{figure}[!t]
    \centering
    \includegraphics[width=0.45\textwidth]{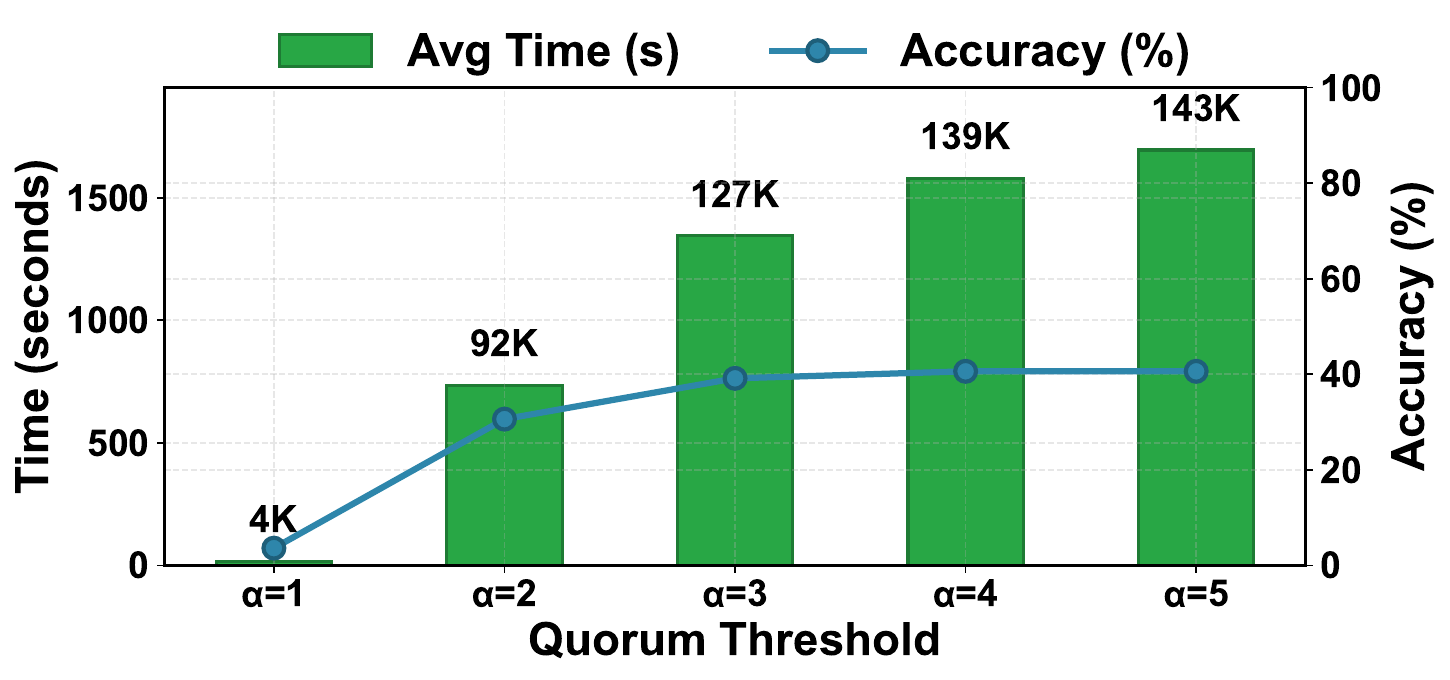}
    \caption{Impact of quorum threshold $\alpha$ on an ensemble of 5 models.} 
    \label{fig:quorum_study}
\end{figure}

    
\para{Impact of quorum threshold.}
Third, we examine the quorum threshold $\alpha$ using an ensemble of 5 open source models in \autoref{sec:exp-setup}. We use the most difficult dataset IMO. \autoref{fig:quorum_study} reveals a clear tradeoff between latency and accuracy. At $\alpha$=1, the system terminates immediately with any single answer, achieving only 17s latency but near-zero accuracy. As $\alpha$ increases, accuracy improves sharply from $\alpha$=1 to $\alpha$=3 (majority threshold), then plateaus: $\alpha$=3, $\alpha$=4, and $\alpha$=5 all achieve approximately 43\% accuracy, while latency continues to grow from 1346s to 1695s and token consumption increases from 127K to 143K. This pattern validates the classical $f+1$ of $2f+1$ quorum design: The majority threshold ($\alpha$=3 for N=5) captures the consensus signal without paying the additional cost of waiting for slower agents. Stricter thresholds provide no accuracy benefit while imposing 17-26\% additional latency overhead.

\begin{figure}[!t]
    \centering
    \includegraphics[width=\linewidth]{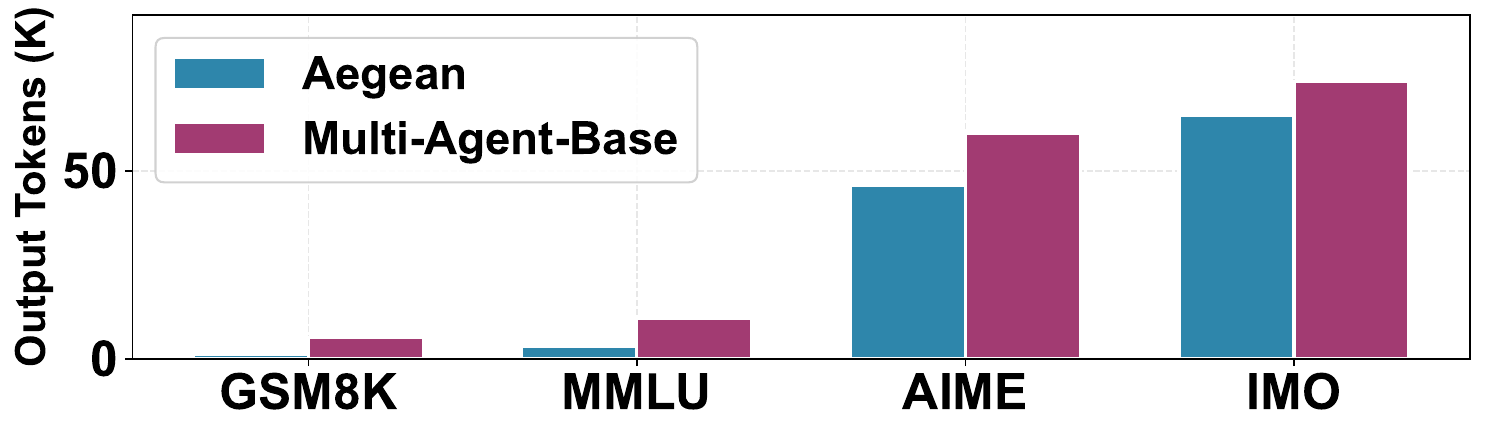}
    \caption{Average output tokens per problem for \sys and Multi-Agent-Base.}
    \label{fig:token-efficiency}
\end{figure}

\subsection{Computational Cost Reduction.}
\label{subsec:cost}
\noindent
Finally, we measure the token savings from early termination. \autoref{fig:token-efficiency} compares average output tokens across four benchmarks. On GSM8K and MMLU, \sys achieves 4.4$\times$ and 3.3$\times$ token reductions respectively, consuming only 1.3K and 3.3K tokens per problem compared to 5.7K and 10.7K for Multi-Agent-Base. These significant reductions reflect the relative simplicity of these benchmarks: Agents reach quorum quickly, allowing the system to terminate rounds early and cancel redundant generation. On more challenging benchmarks, reductions are smaller but remain consistent. AIME shows 1.3$\times$ reduction (46.0K vs 59.9K tokens) and IMO shows 1.1$\times$ reduction (64.8K vs 73.8K tokens). The correlation between problem difficulty and token consumption demonstrates that \sys naturally adapts computational expenditure to task complexity. Easy problems that achieve rapid consensus incur minimal overhead, while difficult problems receive the extended deliberation they require.

\section{Related Work}

\para{Traditional consensus protocols.} Classic consensus algorithms~\cite{liskov2012viewstamped,oki1988viewstamped,tollman2021epaxos,li2016just,ongaro2014raft,castro1999practical,lamport1998parttime} provide strong consistency guarantees. These protocols require multi-round message exchanges and leader-based architectures, creating bottlenecks when nodes are geographically dispersed or subject to variable network conditions. Paxos and Raft require majority quorums with repeated leader coordination, while Byzantine-tolerant protocols like PBFT incur quadratic communication costs that limit scalability. Optimizations such as Fast Paxos~\cite{lamport2006fast} reduce latency but rely on super-majority quorums. More fundamentally, traditional consensus is designed for deterministic state machines and is incompatible with stochastic multi-agent reasoning. \sys adapts these primitives by replacing immediate commits with stability horizons, requiring consensus to persist for $\beta$ consecutive rounds. This filters out transient states that traditional protocols cannot handle.

\para{Multi-agent serving optimizations.}
As LLM applications evolve to complex multi-agent workflows, specialized serving frameworks have emerged. Parrot~\cite{lin2024parrot} introduces Semantic Variables for application-level optimization. Autellix~\cite{luo2025autellix} applies program-level scheduling to mitigate head-of-line blocking. Kairos~\cite{chen2025kairos} and Gradientsys~\cite{song2025gradientsys} reduce latency through workflow-aware priority scheduling. However, these frameworks optimize predetermined workflows, treating multi-agent interaction as static dependency graphs where all nodes must complete. None explores consensus-aware optimizations that adapt workloads based on output convergence. \sys addresses this by embedding consensus logic into the serving layer, using streaming quorum detection to cancel redundant generation once agreement is reached.
\section{Conclusion}

\noindent
\sys represents a shift from ad-hoc multi-agent orchestration to principled consensus-based coordination. Classical consensus protocols enable unreliable processors to act as a coherent system. Similarly, \sys transforms stochastic agents into a reliable reasoning engine. We replace heuristic workflows with formal stability horizons and quorum-based locks. This design aligns computational cost with actual convergence, filters out transient agreements, and accelerates decision-making. Our results demonstrate that \sys brings significant performance benefits to multi-agent LLM systems.

\balance
\bibliographystyle{plain}
\bibliography{ref}


\end{document}